\newcommand{\bepf}{\begin{proof}}
\newcommand{\enpf}{\end{proof}}
\newtheorem{theorem}{Theorem}
\newtheorem{lemma}{Lemma}
\newtheorem{proposition}{Proposition}
\newtheorem{conjecture}{Conjecture}
\newtheorem{corollary}{Corollary}
\newcommand{\set}[1]{\left\{ #1 \right\}}
\newcommand{\pred}[1]{\boldsymbol{1}[#1]}
\newcommand{\R}{\mathbb{R}}
\newcommand{\N}{\mathbb{N}}
\newcommand{\abs}[1]{\left| #1 \right|}
\newcommand{\paren}[1]{\left( #1 \right)}
\newcommand{\sqprn}[1]{\left[ #1 \right]}
\newcommand{\nrm}[1]{\left\Vert #1 \right\Vert}
\newcommand{\norm}[1]{\Vert #1 \Vert}
\newcommand{\vertiii}[1]{{\left\vert\kern-0.25ex\left\vert\kern-0.25ex\left\vert #1 
    \right\vert\kern-0.25ex\right\vert\kern-0.25ex\right\vert}}
\newcommand{\tv}[1]{\nrm{#1}_{\textup{\tiny\textsf{TV}}}}
\newcommand{\ello}[1]{\nrm{#1}_{1}}
\DeclarePairedDelimiter\ceil{\lceil}{\rceil}
\DeclarePairedDelimiter\floor{\lfloor}{\rfloor}
\newcommand{\mexp}{\mathbb{E}}
\newcommand{\X}{\boldsymbol{X}}
\newcommand{\PR}[2][]{\mathop{\mathbb{P}}_{#1}\left( #2 \right)}
\newcommand{\E}[2][]{\mathop{\mexp}_{#1}\left[ #2 \right]}
\newcommand{\Var}[2][]{\mathbb{V}\mathbf{ar}_{#1}\left[ #2 \right]}
\newcommand{\dist}{\bmu}
\newcommand{\empdist}{\hat\dist_n}
\newcommand{\eps}{\varepsilon}
\DeclareMathOperator{\Unif}{Uniform}
\newcommand{\risk}{\mathcal{R}}
\newcommand{\bmu}{\boldsymbol{\mu}}
\newcommand{\bnu}{\boldsymbol{\nu}}
\newcommand{\bxi}{{\xi}}
\newcommand{\bigO}{\mathcal{O}}
\newcommand{\kl}[2]{D_{
\mathrm{KL}
  }
  \left(#1 || #2\right)
  }
\newcommand{\eqdef}{:=}
\newcommand{\decr}[1]{{#1}^{\downarrow}}
\newcommand{\beq}{\begin{eqnarray*}}
\newcommand{\eeq}{\end{eqnarray*}}
\newcommand{\beqn}{\begin{eqnarray}}
\newcommand{\eeqn}{\end{eqnarray}}
\newcommand{\QED}{\hfill\ensuremath{\square}}
\newcommand{\ent}[1][]{%
\ifthenelse{\isempty{#1}}{%
\mathrm{H}
}{
\mathrm{H}^{(#1)}
}}
\newcommand{\loch}[1][]{%
\ifthenelse{\isempty{#1}}{%
\mathrm{h}
}{
\mathrm{h}^{(#1)}
}}
\newcommand{\mathe}{\mathrm{e}}
\def\longto{\mathop{\longrightarrow}\limits}
\newcommand{\ninf}{\longto_{n\to\infty}}
\title{Dimension-Free Empirical Entropy Estimation}
\author{%
  Doron Cohen  \\
  Department of Computer Science \\
	Ben-Gurion University of the Negev \\
  Beer-Sheva, Israel \\
  \texttt{doronv@post.bgu.ac.il} \\
  \and
  Aryeh Kontorovich \\
  Department of Computer Science \\
	Ben-Gurion University of the Negev \\
  Beer-Sheva, Israel \\
  \texttt{karyeh@cs.bgu.ac.il} \\
  \and
    Aaron Koolyk \\
  Department of Computer Science \\
	Hebrew University, 
  Jerusalem, Israel \\
  \texttt{aaron.koolyk@mail.huji.ac.il} \\
  \and
  Geoffrey Wolfer \\
  RIKEN, Center for AI Project \\
  Tokyo, Japan \footnote{Part of this research was conducted when the author was graduate student at Ben-Gurion University of The Negev, Israel. } \\
	\texttt{geo-wolfer@m2.tuat.ac.jp} \\
}
\newcommand{\mathd}{\mathrm{d}}
\begin{document}

\maketitle

\begin{abstract}
  We seek an entropy estimator for discrete distributions
  with fully empirical accuracy bounds. 
  As stated, this
  goal is infeasible without some prior assumptions on
  the distribution. We discover that a certain information moment assumption
  renders the problem feasible. We argue that the moment assumption
  is natural and, in some sense, {\em minimalistic}
  --- weaker than finite support or tail decay conditions.
  Under the moment assumption,
  we provide the first
  finite-sample entropy estimates
  for infinite alphabets,
nearly recovering the known minimax rates.
Moreover, we demonstrate that our empirical bounds 
  are
  significantly
  sharper
  than 
the state-of-the-art bounds, for various natural distributions and non-trivial sample regimes.
  Along the way, we give a dimension-free analogue of the Cover-Thomas
  result on entropy continuity (with respect to total variation distance)
  for finite alphabets, which may be of independent interest.
 Additionally, we resolve 
 all of the
 open problems
 posed by
 J\"urgensen and Matthews, 2010.
\end{abstract}

\section{Introduction}
\label{sec:intro}

Estimating the entropy of a discrete distribution based on a finite iid sample is a classic problem with theoretical and practical ramifications. Considerable progress has been made
in the case of a finite alphabet, and the countably infinite
case has also attracted a fair amount of attention in
recent years. 
(See Section~\ref{sec:rel-work}
for a some background, motivation, and related work.)
A less-addressed issue is one of {\em empirical}
accuracy estimates: data-dependent bounds 
adaptive
to
the particular distribution being sampled.

Our point of departure is
the simpler 
(to analyze)
problem of estimating a discrete distribution $\dist$ in total variation norm $\tv{\cdot}=\frac12\nrm{\cdot}_1$,
where
the most recent advance
was made by \citet{CKW20}; see therein for a literature review. If $\bmu$ is a distribution on $\N$ and $\empdist$ is its empirical realization based on a sample of size $n$, then Theorem 2.1 of \citeauthor{CKW20} states that
with probability at least $1-\delta$,
\beqn
\label{eq:ckw20}
\ello{\bmu-\empdist} &\le& \frac2{\sqrt n}\sum_{j\in\N}\sqrt{\empdist(j)}
+6\sqrt{\frac{\log(2/\delta)}{2n}}.
\eeqn
This bound has the advantage of being valid for all distributions on $\N$,
without any prior assumptions, and being fully empirical: it yields a
risk 
estimate
that is computable based on the observed sample, not depending on
any unknown quantities.
(Additionally, \citeauthor{CKW20}
argue that (\ref{eq:ckw20})
is near-optimal in a well-defined sense.)
The question we set out to explore in this paper is: What analogues of
\eqref{eq:ckw20}
are possible for discrete entropy estimation?

When $\bmu$ has support size $d<\infty$, an answer to our question is
readily provided by combining \eqref{eq:ckw20} with \citet[Theorem 17.3.3]{cover2001elements}, which asserts that,
for
$\ello{\bmu-\bnu}\le1/2$,
we have
\beqn
\label{eq:CT}
\abs{\ent(\bmu)-\ent(\bnu)}
&\le& 
\ello{\bmu-\bnu}\log\frac{d}{\ello{\bmu-\bnu}}
,
\eeqn
where $\ent(\cdot)$ is the entropy functional defined in
\eqref{eq:halpha}.
Indeed, taking $\bmu$ as in \eqref{eq:ckw20}
and $\bnu$ to be $\empdist$ yields a fully empirical estimate on
$\abs{\ent(\bmu)-\ent(\empdist)}$.
For fixed $d<\infty$,
no technique relying
on the plug-in estimator
can yield  minimax rates
\citep{wu2016minimax}.
The plug-in is, however, 
asymptotically optimal
for fixed $d<\infty$
\citep{paninski2003estimation}
as well as strongly universally consistent even for $d=\infty$
\citep{antos2001convergence},
and is 
among the few methods
for which
explicitly computable finite-sample
risk bounds are known.

The thrust of this paper is to replace the restrictive
finite-support assumption with 
considerably
more general
moment conditions. 
It is well-known that
when estimating the mean of some random variable
$X
$, the first-moment assumption
$\mexp|X|\le M$ is not sufficient to yield any
finite-sample information.\footnote{
Even distinguishing,
for $X\ge0$,
between $\mexp X=0$
and $\mexp X=M$ 
based on a finite sample
is impossible with any degree of confidence.
Of course,
$\frac1n\sum_{i=1}^n X_i\to\mexp X$
almost surely, by the strong
law of large numbers.
}
Strengthening the assumption to
$\mexp|X|^\alpha\le M$, for any $\alpha>1$,
immediately yields
finite-sample empirical estimates
on $\abs{\mexp X-\frac1n\sum_{i=1}^n X_i}$
via
the
\citet{bahr-esseen-ineq}
inequality.\footnote{
Put $Y=X-\mexp X$;
then $\mexp|Y|\le2M$.
For
$1<\alpha<2$,
a sharper version of the 
Bahr-Esseen inequality
\citep{pinelis-best-15}
states that 
$\E{\abs{\sum_{i=1}^n Y_i}^\alpha}\le 2n(2M)^\alpha$,
which implies tail bounds via
Markov's inequality.
Better rates are available via the
median-of-means estimator, see \citet{DBLP:journals/focm/LugosiM19}.
}
In this sense, a bound on the $(1+\eps)$th moment
is a {\em minimal} requirement
for empirical mean estimation.
However, it is not immediately obvious how to apply this
insight to the entropy estimation problem:
the corresponding random variable is $X=-\log\bmu(I)$,
where $I\sim \bmu$, but rather than being given
iid samples of $X$, we are only given draws of $I$.
In Corollary~\ref{cor:adaptive-hybrid-bound},
we provide an empirical
entropy estimate under
a
$(1+\eps)$th moment
assumption
(for any $\eps>0$)
on
$X=-\log\bmu(I)$.

\paragraph{Our contribution.}
In Theorem~\ref{thm:dimfree},
we provide a dimension-free analogue of \eqref{eq:CT}, which,
combined with \eqref{eq:ckw20}, allows for empirical accuracy bounds
on the plug-in entropy estimator under a minimalistic moment assumption.
Moreover, for this 
rich
class
of distributions, the plug-in estimator turns
out to be
asymptotically
optimal, as we show in 
Theorem~\ref{thm:plug-in-minimax}.
Our moment assumption is natural and 
considerably less restrictive than the finite-alphabet %
and tail conditions studied in previous works
(see 
Sections
\ref{sec:moments}
and
\ref{sec:moments-tails}).
Moreover,
as we argue
in Theorem~\ref{thm:no-emp},
without such a moment assumption,
an empirical bound is not feasible.
As we demonstrate in Section~\ref{sec:rates},
the rates provided by our empirical bound
compare favorably against the state of the art.

\section{Definitions and notation}
\label{sec:defs}

Our logarithms will always be base $\mathe$ by default.
For discrete distributions, there is no loss of generality in
taking
the domain
to be
the natural numbers
$\N=\set{1, 2, 3, \dots}$.
For $k\in\N$, we write $[k]\eqdef\set{
i\in\N:i\le k
}$.
The set of all probability distributions
on $\N$
will be denoted by $\Delta_\N$.
For $d\in\N$, we write $\Delta_d\subset\Delta_\N$
to denote those $\dist$ whose support is contained in $[d]$.

We 
define
the operator
$\decr{(\cdot)}$,
which maps any $\bmu\in\Delta_\N$
to its non-increasing rearrangement
$\decr{\bmu}$.
The set of all
non-increasing distributions
will be denoted by
$\Delta_\N^{\downarrow}
:=
\set{\decr{\bmu}:\bmu\in\Delta_\N}
$.

We write $\R_+:=[0,\infty)$.
For any $\bxi:\N\to\R_+$ and 
$\alpha\ge0$, define
\beqn
\label{eq:halpha}
\ent[\alpha](\bxi) := \sum_{j\in\N
:\bxi(j)>0
}
\xi(j)
\abs{\log
  \bxi(j)
  }^\alpha
  .
\eeqn
For $\bxi\in\R^\N$, denote by $\abs{\bxi}\in\R_+^\N$ the elementwise
application of $\abs{\cdot}$ to $\bxi$.
When
$\bxi
\in\Delta_\N
$
and
$\alpha=1$,
\eqref{eq:halpha}
recovers the standard definition of entropy,
which we denote by $
\ent(\bxi) :=
\ent[1](\bxi)
$. For general
$\alpha>0$, this quantity may be referred to as the $\alpha$th {\em moment
  of information}.
For $h\ge0$, define
\beq
\Delta_\N^{(\alpha)}[h]
=\set{\bmu\in
\Delta_\N
:
\ent[\alpha](\bmu)\le h}
\eeq
and also
$
\Delta_\N^{(\alpha)}
:=
\bigcup_{h\ge0}\Delta_\N^{(\alpha)}[h]
$
and
$
\Delta_\N^{\downarrow(\alpha)}[h]
:=
\Delta_\N^{\downarrow}
\cap
\Delta_\N^{(\alpha)}[h]
$.

For $n \in \N$ and $\bmu \in \Delta_\N$,
we write
$\X = (X_1, \dots, X_n) \sim \bmu^{n}$
to mean that the components of the vector $\X$
are drawn iid from $\bmu$.
The empirical measure
$\empdist \in \Delta_\N$
induced by
the sample $\X$
is defined
by
$
\empdist(j) = \frac{1}{n}\sum_{i \in [n]}\pred{X_i = j}$.
For any $\bxi\in\R^\N$ and $0< p<\infty$,
the $\ell_p$ (pseudo)norm is defined by
$\nrm{\bxi}_p^p=\sum_{j\in\N}|\bxi(j)|^p$
and
$\nrm{\bxi}_\infty=\sup_{j\in\N}|\bxi(j)|$.

For $\alpha
,
h>0$, and $n\in\N$,
define the $L_1$ {\em minimax risk}
for the $\alpha$th moment by
\beqn
\label{eq:risk}
\risk_n^{(\alpha)}(h)
\eqdef
\inf_{\hat{H}} \sup_{\bmu 
\in\Delta_\N^{(\alpha)}[h]
} \mexp|\hat{H}(X_1, \dots, X_n) - \ent(\bmu)|,
\eeqn
where the infimum is over all
mappings
$\hat{H} \colon \N^n \to \R_+$.

\section{Main results}
\label{sec:main-res}

Our first result is a dimension-free analogue
of \eqref{eq:CT}:
\begin{theorem}
\label{thm:dimfree}  
For
all
$\alpha>1$,
$\ent:\Delta_\N^{(\alpha)}\to\R_+$
is uniformly continuous under $\ell_1$.
In particular,
for all
$\bmu, \bnu \in \Delta_\N^{(\alpha)}$
satisfying
$
\nrm{\bmu
- \bnu
}
_\infty
<
\mathe^{-\alpha}
$,
we have
\beqn
\label{eq:main-ub}
\abs{\ent(\bmu) - \ent(\bnu)}
&\leq&
\nrm{\bmu - \bnu}_1^{1 - 1/\alpha}  \left(
2\mathe^\alpha
\nrm{\bmu-\bnu}_\infty
\log^\alpha\frac{1}{\nrm{\bmu-\bnu}_\infty}
+ \ent[\alpha](\bmu)
+ \ent[\alpha](\bnu)
\right)^{1/\alpha}\\
&\leq&
\label{eq:main-ub-mod}
\nrm{\bmu - \bnu}_1^{1 - 1/\alpha}  
\left(2
\mathe
\nrm{\bmu-\bnu}_\infty^{1/\alpha}
\log\frac{1}{\nrm{\bmu-\bnu}_\infty}
+ \ent[\alpha](\bmu)^{1/\alpha}
+ \ent[\alpha](\bnu)^{1/\alpha}
\right)
.
\eeqn
Moreover,
a weaker form of
(\ref{eq:main-ub})
holds with
$\alpha^\alpha$
in place of
$
\mathe^\alpha
\nrm{\bmu-\bnu}_\infty
\log^\alpha\frac{1}{\nrm{\bmu-\bnu}_\infty}
$
under the weaker condition
$
\nrm{\bmu
- \bnu
}
_\infty
<
1/2
$,
from which
the correspondingly
weaker form of (\ref{eq:main-ub-mod})
follows as well.
\end{theorem}

The
requirement 
in Theorem~\ref{thm:dimfree} 
that $\alpha>1$
cannot be dispensed with,
as
the function $\ent:\Delta_\N^{(\alpha)}[h]\to\R_+$
is not continuous under $\ell_1$
for $\alpha=1$
(see Remark
following Lemma~\ref{lem:Hcont}),
and,
a fortiori, is not uniformly continuous.
Thus, there can be no
function $F:\R_+^2\to\R_+$
  satisfying
  \beq
\abs{\ent(\bmu) - \ent(\bnu)}
&\leq&
F(\nrm{\bmu-\bnu}_1,
h
),
\qquad
h>0,
\bmu,\bnu\in
\Delta_\N^{(1)}[h]
\eeq
with the additional property that
for any two sequences $\bmu_n,\bnu_n\in\Delta_\N$
satisfying
$\eps_n:=\nrm{\bmu_n-\bnu_n}_1\to0$,
it holds that
$F(\eps_n,h)\to0$.
Moreover, the upper bound in 
Theorem~\ref{thm:dimfree} is tight, up to a constant factor:
\begin{theorem}
\label{thm:dimfree-lb}
For every
$0<\eps<
1/2
$
and
$\alpha\ge1$,
there are
$\bmu,\bnu\in
\Delta_\N
$
such that
$\eps=\nrm{\bmu - \bnu}_1
\ge\nrm{\bmu - \bnu}_\infty
$
and
\beqn
\label{eq:lb-tight}
\abs{\ent(\bmu) - \ent(\bnu)}
&\ge&
c
\eps^{1 - 1/\alpha}
\paren{
2\mathe\eps^{1/\alpha}\log\frac1\eps
+ \ent[\alpha](\bmu)^{1/\alpha}
+ \ent[\alpha](\bnu)^{1/\alpha}
},
\eeqn
where $c\geq \frac{1}{2\mathe+2}$
is a universal constant.
\end{theorem}

Perhaps surprisingly,\footnote{
Since $\ell_1$ dominates all of the $\ell_p$
norms, continuity of a function under $\ell_p$
trivially implies continuity under $\ell_1$,
but the reverse implication is generally not true.
}
it turns out that
$\ent:\Delta_\N^{(\alpha)}[h]
\to\R_+$
is uniformly continuous 
not only under $\ell_1$,
but actually
under all $\ell_p$
norms:
\begin{theorem}
  \label{thm:Lp-pos}
  There is a function
  $F:\R_+^4\to\R_+$
  such that
  \beq
  \abs{\ent(\bmu) - \ent(\bnu)}
  &\leq&
  F(\nrm{\bmu-\bnu}_p,h,\alpha,p),
\qquad
  h>0,\alpha>1,p\in[1,\infty], \bmu,\bnu\in\Delta_\N^{(\alpha)}[h]
  \eeq
  with the additional property that
whenever $\eps_n:=\nrm{\bmu_n-\bnu_n}_p\to0$,
we have
$F(\eps_n,h,\alpha,p)\to0$.

\end{theorem}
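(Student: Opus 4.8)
The plan is to bootstrap from Theorem~\ref{thm:dimfree} using the elementary fact that on an infinite alphabet the $\ell_1$ and $\ell_p$ topologies, while not equivalent in general, \emph{do} become comparable once we restrict to a set with a uniform information-moment bound, because such a bound forces uniform tail decay. Concretely, I would first establish a quantitative ``$\ell_p$ controls $\ell_1$ on $\Delta_\N^{(\alpha)}[h]$'' lemma: there is a function $G(\eta,h,\alpha,p)$, tending to $0$ as $\eta\to0$ for fixed $(h,\alpha,p)$, such that $\nrm{\bmu-\bnu}_1\le G(\nrm{\bmu-\bnu}_p,h,\alpha,p)$ whenever $\bmu,\bnu\in\Delta_\N^{(\alpha)}[h]$. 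Granting this lemma, the theorem is immediate: set
\beq
F(\eta,h,\alpha,p)
&=&
G(\eta,h,\alpha,p)^{1-1/\alpha}\paren{2\alpha + 2h^{1/\alpha}},
\eeq
and apply the second displayed inequality of Theorem~\ref{thm:dimfree} with $\ent[\alpha](\bmu),\ent[\alpha](\bnu)\le h$ (the hypothesis $\nrm{\bmu-\bnu}_\infty<1/2$ there is harmless, since for the asymptotic statement we may assume $\eta$ small and $\nrm{\cdot}_\infty\le\nrm{\cdot}_p$ for $p<\infty$, while the $p=\infty$ case needs a separate truncation-based argument — see below). Since $G\to0$ forces $F\to0$, we are done.

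The content is therefore the tail-decay lemma, and its proof runs as follows. Fix $\bmu\in\Delta_\N^{(\alpha)}[h]$ and work with the non-increasing rearrangement $\decr{\bmu}$; a Markov-type argument on $\ent[\alpha]$ bounds how fast $\decr{\bmu}(j)$ must decay. Precisely, if $\decr{\bmu}(j)=p$ then the first $j$ coordinates each contribute at least $p\abs{\log p}^\alpha$ (for $p$ small enough that $\abs{\log p}^\alpha$ is increasing), so $jp\abs{\log p}^\alpha\le h$, giving a bound of the form $\decr{\bmu}(j)\le \phi_h(j)$ with $\phi_h(j)\to0$ as $j\to\infty$ at a rate like $1/(j\log^\alpha j)$. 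This yields a \emph{uniform} modulus of $\ell_1$-tail-smallness: $\sum_{j>N}\decr{\bmu}(j)\le \tau_h(N)$ with $\tau_h(N)\to0$ as $N\to\infty$, uniformly over $\Delta_\N^{(\alpha)}[h]$. Now split $\nrm{\bmu-\bnu}_1 \le \sum_{j\le N}\abs{\bmu(j)-\bnu(j)} + \sum_{j>N}\bmu(j)+\sum_{j>N}\bnu(j)$; bound the head by $N^{1-1/p}\nrm{\bmu-\bnu}_p$ (Hölder), and the two tails by $\tau_h(N)$ each — but one must be careful, since the tail of $\bmu$ is controlled by the \emph{rearranged} tail only after noting $\sum_{j>N}\bmu(j)\le \sum_{j>N'}\decr{\bmu}(j)$ is false in general; instead I would use that $\{j:\bmu(j)>\phi_h^{-1}\text{-threshold}\}$ has bounded size. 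Cleaner: for any threshold $t>0$, $\abs{\{j:\bmu(j)\ge t\}}\le K_h(t)$ where $K_h(t)=h/(t\abs{\log t}^\alpha)$ for small $t$, and on the complement $\bmu(j)<t$, so $\sum_{j>N}\bmu(j)$ with $N\ge K_h(t)$ is at most ... — this still needs the mass, not just the cardinality, so the right statement is the rearrangement one applied coordinatewise after sorting \emph{both} $\bmu$ and $\bnu$ simultaneously, which requires a short argument that sorting does not increase $\nrm{\bmu-\bnu}_p$. Optimizing $N=N(\eta)\to\infty$ slowly as $\eta=\nrm{\bmu-\bnu}_p\to0$ (e.g. balancing $N^{1-1/p}\eta$ against $\tau_h(N)$) produces the desired $G(\eta,h,\alpha,p)\to0$.

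The main obstacle I anticipate is precisely this interplay between rearrangement and the $\ell_p$ distance: $\ent[\alpha]$ is rearrangement-invariant but $\nrm{\bmu-\bnu}_p$ is not, so one cannot simply pass to $\decr{\bmu},\decr{\bnu}$. The fix is a lemma stating $\nrm{\decr{\bmu}-\decr{\bnu}}_p\le C\nrm{\bmu-\bnu}_p$ (in fact the nonexpansiveness of non-increasing rearrangement on $\ell_p$, a classical rearrangement inequality, gives $C=1$ for the decreasing rearrangement of each — though what we need is a bound on the distance of rearrangements, which is the standard ``rearrangements are $\ell_p$-nonexpansive'' fact), after which all tail estimates may be carried out in the sorted picture and transferred back. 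The remaining wrinkle is the endpoint $p=\infty$: here $N^{1-1/p}=N$ so the head term is $N\nrm{\bmu-\bnu}_\infty$, and one still needs $N\to\infty$, so the same balancing works as long as $\tau_h(N)$ decays — which it does, polynomially up to logs — so $G(\eta,h,\alpha,\infty)\to0$ as well. I do not expect any genuinely new idea beyond Theorem~\ref{thm:dimfree} and these rearrangement/tail estimates; the work is in assembling the quantitative modulus.
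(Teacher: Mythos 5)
Your route is correct in substance but genuinely different from the paper's, and in one respect stronger. The paper argues non-constructively: it extends \citet[Lemma 1]{BKZ17} to show (Lemma~\ref{lem:compact}) that $\Delta_\N^{\downarrow(\alpha)}[h]$ is compact under $\ell_p$, deduces continuity of $\ent$ there (Lemma~\ref{lem:Hcont}, via Theorem~\ref{thm:dimfree} plus the Scheff\'e/Riesz-type Lemma~\ref{lem:pt-l1}), invokes ``continuous on a compact set $\Rightarrow$ uniformly continuous'' to get $F$, and transfers to general $\bmu,\bnu$ through the rearrangement nonexpansiveness \eqref{eq:munudecr} --- the same classical fact you identify. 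You instead give a direct quantitative argument: the moment bound forces a uniform tail estimate in the sorted picture ($\decr{\bmu}(j)\le 1/j$, hence $\sum_{j>N}\decr{\bmu}(j)\le h\,\log^{-\alpha}N$), the head is controlled by H\"older as $N^{1-1/p}\eta$, and Theorem~\ref{thm:dimfree} finishes. This buys an explicit modulus, e.g.\ $F(\eta,h,\alpha,p)=\bigl(2\alpha+2h^{1/\alpha}\bigr)\inf_{N\ge2}\bigl(N^{1-1/p}\eta+2h\log^{-\alpha}N\bigr)^{1-1/\alpha}$, which decays like $\log^{-(\alpha-1)}(1/\eta)$ and thus essentially matches, at $p=\alpha=2$, the $1/\log(1/\nrm{\bmu-\bnu}_2)$ lower bound recorded in the Remark after the theorem; in other words, your approach addresses the functional-dependence question the paper leaves open, which the compactness argument cannot do.

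Two small repairs are needed. First, your auxiliary lemma as stated ($\nrm{\bmu-\bnu}_1\le G(\nrm{\bmu-\bnu}_p,h,\alpha,p)$) is not what your sorted-picture argument proves: rearrangement is nonexpansive, not expansive, so you only control $\nrm{\decr{\bmu}-\decr{\bnu}}_1$. Either conclude by applying Theorem~\ref{thm:dimfree} to $(\decr{\bmu},\decr{\bnu})$ and using that $\ent$ and $\ent[\alpha]$ are rearrangement-invariant (your ``transfer back''), or prove the unsorted lemma directly by thresholding on mass rather than on index: with $A=\set{i:\bmu(i)\ge t\text{ or }\bnu(i)\ge t}$ one has $\abs{A}\le 2/t$, the head is at most $(2/t)^{1-1/p}\,\nrm{\bmu-\bnu}_p$, and Markov applied to $\ent[\alpha]$ gives $\sum_{i\notin A}\bigl(\bmu(i)+\bnu(i)\bigr)\le 2h\log^{-\alpha}(1/t)$ --- this is the mass bound you were missing when you noted that cardinality alone does not suffice. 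Second, no separate truncation argument is needed at $p=\infty$ for the hypothesis $\nrm{\bmu-\bnu}_\infty<1/2$ of Theorem~\ref{thm:dimfree}: there $\eta$ \emph{is} the $\ell_\infty$ distance (and \eqref{eq:munudecr} covers $p=\infty$), while for $\eta\ge 1/2$ one may simply cap $F$ by $2(1+h)$, since $\ent(\bmu)\le 1+h$ for every $\bmu\in\Delta_\N^{(\alpha)}[h]$ with $\alpha\ge1$.
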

\noindent Remark.
Although 
Theorem~\ref{thm:Lp-pos}
establishes uniform continuity,
it gives
no hint 
as to the functional dependence of the modulus of continuity $F$
on $\alpha$, $p$, $h$, and $\nrm{\bmu-\bnu}_p$.
We leave this as a fascinating open problem --- even though
the practical applications are likely to be limited:
it follows from \citet{wyner2003lower}
and Theorem~\ref{thm:plug-in-minimax}
that for $p=\alpha=2$ and fixed $h$,
$F(\nrm{\bmu-\bnu}_2,h,2,2)$
cannot decay at a faster rate than $1/\log(1/\nrm{\bmu-\bnu}_2)$.
 
Combining Theorem~\ref{thm:dimfree}
with (\ref{eq:ckw20})
yields an empirical (under moment assumptions) bound
for the plug-in entropy estimator:
\begin{corollary}
\label{cor:adaptive-hybrid-bound}
For 
all
$\alpha>1$,
$h>0$,
$\delta\in(0,1)$,
$n \geq  2 \log \frac{4}{\delta}$,
and
$\dist \in \Delta_\N^{(\alpha)}[h]$, 
we have that
\beqn
\label{eq:AHB-crude}
\abs{\ent(\bmu)-\ent(\empdist)}
&\le&
\Big( 2 \alpha^{\alpha} + 
h
+ \ent[\alpha](\empdist) \Big)^{1/\alpha} 
\left(
\frac{2\nrm{\empdist}_{1/2}^{1/2}}{\sqrt{n}} + 
6 \sqrt{\frac{\log{(
4
/\delta)}}{2n}}
\right)^{1 - 1/\alpha}
\eeqn
holds
with probability at least $1 - \delta$.
For 
all
$\alpha>1$,
$h>0$,
$0 < \eps <\mathe^{-\alpha}$,
$\delta\in(0,1)$,
$n \geq  \frac{2}{\eps^2} \log \frac{4}{\delta}$,
and
$\dist \in \Delta_\N^{(\alpha)}[h]$,
we have that
\beqn
\abs{\ent(\bmu)-\ent(\empdist)}
&\le&
\Big( 2\mathe^{\alpha}\eps\log^{1/\alpha}\frac1\eps 
\\\nonumber
&
+ 
&
\label{eq:AHB-finer}
h
+ \ent[\alpha](\empdist) \Big)^{1/\alpha} 
\left(
\frac{2\nrm{\empdist}_{1/2}^{1/2}}{\sqrt{n}} + 
6 \sqrt{\frac{\log{(
4
/\delta)}}{2n}}
\right)^{1 - 1/\alpha}
\eeqn
holds
with probability at least $1 - \delta$.
For 
all
$\alpha>1$,
$h>0$,
$\delta\in(0,1)$,
$n \geq  {2e^{2 \alpha}}{} \log \frac{4}{\delta}$,
and
$\dist \in \Delta_\N^{(\alpha)}[h]$,
we have that
\beqn
\abs{\ent(\bmu)-\ent(\empdist)}
&\le&
\Big( 2 \left(\frac{\mathe}{2}\right)^\alpha  \sqrt{\frac{2}{n}\log \frac{4}{\delta}}
\abs{\log \left(\frac{2}{n} \log \frac{4}{\delta}\right)}^{\alpha}
\\\nonumber
&+&
h
\label{eq:AHB-finer'}
+ \ent[\alpha](\empdist) \Big)^{1/\alpha} 
\left(
\frac{2\nrm{\empdist}_{1/2}^{1/2}}{\sqrt{n}} + 
6 \sqrt{\frac{\log{(
4
/\delta)}}{2n}}
\right)^{1 - 1/\alpha}
\eeqn
holds
with probability at least $1 - \delta$.
\end{corollary}
\noindent Remark.
Since the estimates in
Corollary~\ref{cor:adaptive-hybrid-bound}
involve the random quantity
$\nrm{\empdist}_{1/2}$, it is natural
to inquire as to the behavior of the latter.
It follows from 
\citet[Proposition C.1]{CKW20}
that
$n^{-1/2}\nrm{\empdist}_{1/2}^{1/2}
\to0$
in 
almost surely.
The rate of convergence must necessarily
depend on $\bmu$ itself (cf. 
\citet[Remark 9]{berend2013sharp}).

In Section~\ref{sec:rates},
we compare the rates implied by
Corollary~\ref{cor:adaptive-hybrid-bound}
to the state of the art on various distributions.

Next, we 
examine
the optimality of the plug-in estimate
by analyzing the minimax risk, defined in
(\ref{eq:risk}).
It was known 
\citep[Appendix~A]{silva2018shannon}
that assuming $\ent(\bmu)<\infty$
does not suffice to yield a minimax rate
for the $L_2$ risk:
\beq
\inf_{\hat{H}:\N^n\to\R_+} 
\sup_{
\bmu 
\in\Delta_\N^{(1)}
}
\mexp
\paren{
\hat{H}(X_1, \dots, X_n) - \ent(\bmu)
}
^2
  = \infty.
\eeq
This technique yields
an analogous result for the $L_1$ risk
as well.
We 
strengthen these results
in two ways:
(i) by lower-bounding the $L_1$ risk
(rather than $L_2$, which is never smaller),
and (ii)
by
restricting $\bmu$ to $\Delta_\N^{(1)}[h]$
and obtaining a finitary, quantitative lower bound:

\begin{theorem}
  \label{thm:no-emp}

For $\alpha=1$,
there is a universal constant $C>0$ such that
for all $h>1$ and $n\in\N, n \geq 2$, we have
$\risk_n^{(1)}(h)\ge Ch$.

\end{theorem}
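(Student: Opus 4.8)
\emph{Proof idea.} The plan is to run a two-point (Le Cam) argument, whose whole force comes from the fact that Shannon entropy barely notices a perturbation that relocates a vanishingly small amount of mass onto an astronomically large number of previously unused symbols. Fix $n\in\N$ and $h>1$. I would take $\bmu_0$ to be the point mass at the symbol $1$, so $\ent(\bmu_0)=0$, and $\bmu_1$ to put mass $1-p$ on the symbol $1$ and mass $p/N$ on each of $2,\dots,N+1$, with parameters $p\in(0,\tfrac1{2n}]$ and $N\in\N$ chosen below. A direct computation gives $\ent(\bmu_1)=-(1-p)\log(1-p)-p\log(p/N)\ge p\log N$, so by pushing $N$ up to exponential size in $nh$ one can force $\ent(\bmu_1)$ to be of order $h$ even while $p$ is as small as $1/(2n)$ --- and a sample of size $n$ from $\bmu_1$ then almost never leaves the symbol $1$, making $\bmu_1^n$ and $\bmu_0^n$ nearly indistinguishable.

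To pin $\bmu_1$ exactly inside $\Delta_\N^{(1)}[h]$ with a clean entropy gap, I would fix any integer $N\ge e^{nh}$ and note that $p\mapsto\ent(\bmu_1)$ is continuous on $[0,\tfrac1{2n}]$, equals $0$ at $p=0$, and is at least $\tfrac{\log N}{2n}\ge h/2$ at $p=\tfrac1{2n}$; the intermediate value theorem then yields $p^\star\in(0,\tfrac1{2n}]$ with $\ent(\bmu_1)=h/2$. Consequently $\bmu_0,\bmu_1\in\Delta_\N^{(1)}[h]$ and $\abs{\ent(\bmu_0)-\ent(\bmu_1)}=h/2$. For the statistical closeness I would couple $\bmu_0^n$ and $\bmu_1^n$ on the event that all $n$ draws equal $1$, which has probability $(1-p^\star)^n$ under $\bmu_1$ and $1$ under $\bmu_0$, so $\tv{\bmu_0^n-\bmu_1^n}\le 1-(1-p^\star)^n\le np^\star\le \tfrac12$.

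Finally I would convert estimation into testing in the usual way: for any $\hat H\colon\N^n\to\R_+$ the test $\pred{\hat H>h/4}$ distinguishes $\bmu_0$ from $\bmu_1$ with total error probability at least $1-\tv{\bmu_0^n-\bmu_1^n}\ge\tfrac12$, and whenever that test errs, $\hat H$ is off by at least $h/4$ under one of $\bmu_0,\bmu_1$; averaging the two risks gives $\max_{j\in\{0,1\}}\E[\bmu_j]{\abs{\hat H-\ent(\bmu_j)}}\ge\tfrac12\cdot\tfrac h4\cdot\tfrac12=\tfrac h{16}$, hence $\risk_n^{(1)}(h)\ge h/16$ and $C=1/16$ suffices. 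I expect the only delicate point to be the joint calibration of $p^\star$ and $N$ --- simultaneously keeping $\bmu_1$ inside $\Delta_\N^{(1)}[h]$, keeping the entropy separation $\Theta(h)$, and keeping $np^\star\le\tfrac12$, with a constant $C$ that is uniform in both $n$ and $h$; the intermediate value step above is precisely what makes this go through, and the rest is bookkeeping.
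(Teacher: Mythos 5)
Your proof is correct and follows essentially the same route as the paper: a two-point Le Cam argument contrasting the point mass at $1$ with a distribution placing mass $1-O(1/n)$ on that atom and spreading $O(1/n)$ mass over exponentially many symbols, so that the entropies differ by $\Theta(h)$ while the $n$-fold products are statistically close. The only differences are cosmetic --- you calibrate the small mass via the intermediate value theorem and control closeness by a coupling/total-variation bound, where the paper fixes the mass at $1/(2n)$, tunes the support size explicitly, and uses a KL-based Le Cam bound, yielding the slightly better constant $1/(4\mathe)$ versus your $1/16$.
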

\noindent Remark.
The above result complements --- but is not directly
comparable to ---
\citet[Theorem 4]{antos2001convergence}.
Ours gives a quantitative dependence on $h$ but
constructs an adversarial distribution for
each sample size $n$; theirs is asymptotic only
but a single adversarial distribution suffices
for all $n$.

\noindent Remark.
Our technique immediately yields a lower bound of
$C h^2$
on the $L_2$ minimax risk.

In contradistinction to the 
$\alpha=1$ case, where no minimax rate
exists, we show that the plug-in estimator
is minimax for all $\alpha>1$:
\begin{theorem}
\label{thm:plug-in-minimax}
The following bounds hold
for the
$L_1$ minimax risk:
\begin{enumerate}
\item[(a)]
Upper bound: for all $h>0,\alpha>1$,
\beq
\risk_n^{(\alpha)}(h)
&\leq&
\frac{1+\log n}{\sqrt{n}} + \frac{2^{\alpha-1}h}{\log^{\alpha-1}n},
\qquad n\in\N
;
\eeq
further, this
bound is achieved by the plug-in estimate
$\ent(\empdist)$.
\item[(b)]
Lower bound:
for each $\alpha > 0$, $n\in \N$ there is an 
$h > 0$ such that
\beq
\risk_{n}^{(\alpha)}(h)
&\geq&
\frac{h}{4 \cdot 3^\alpha \log^{\alpha-1}n}.
\eeq
\end{enumerate}
\end{theorem}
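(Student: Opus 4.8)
The plan is to treat the two parts separately, and in each to reduce the infinite-alphabet entropy estimation problem to a controlled truncation plus a finite-alphabet or Bernoulli-type argument.

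For part (a), the strategy is to analyze the plug-in estimator $\ent(\empdist)$ directly by splitting the alphabet at a threshold index. Given $\bmu\in\Delta_\N^{(\alpha)}[h]$, fix a cutoff $k=k(n)$ (the bound suggests $k$ of order roughly $n$, equivalently a mass threshold of order $1/n$), and write $\ent(\bmu)=\ent(\bmu\restriction_{[k]})+\ent(\bmu\restriction_{[k]^c})$ in the obvious additive sense, doing the same for $\empdist$. The tail term is controlled by the moment assumption: on the set where $\bmu(j)$ is small, say $\bmu(j)\le 1/k$, one has $|\log\bmu(j)|\ge\log k$, so $\bmu(j)|\log\bmu(j)|\le \bmu(j)|\log\bmu(j)|^\alpha/\log^{\alpha-1}k$, and summing gives a tail entropy contribution at most $h/\log^{\alpha-1}k$; the factor $2^{\alpha-1}$ in the statement comes from handling the empirical tail the same way (or from a convexity/Jensen step when passing between $\bmu$ and $\empdist$ on the tail). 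For the head term, the alphabet is effectively of size $k$, and on a finite alphabet the plug-in entropy estimator has $L_1$ risk $O((\log k)/\sqrt n)$ — this is the standard bound, obtainable from the $\log(n+1)/\sqrt n$-type estimate for missing mass / plug-in entropy on finite supports (e.g.\ via bounded-differences or via the known $\ent(\empdist)$ bias/variance analysis). Choosing $k=n$ balances $(\log n)/\sqrt n$ against $h/\log^{\alpha-1}n$ and yields the claimed $(1+\log n)/\sqrt n + 2^{\alpha-1}h/\log^{\alpha-1}n$. The main obstacle here is getting the head-term constant down to exactly $1+\log n$ rather than an unspecified $O(\log n)$; I expect this needs the sharp plug-in entropy risk bound (the one giving $\log(n+1)/\sqrt n$ via the fact that $|\log x|$ has a controllable modulus against the multinomial concentration), so I would invoke or reprove that clean estimate.

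For part (b), the plan is a two-point (or small-family) Le Cam argument. Construct, for the given $n$ and $\alpha$, two distributions in $\Delta_\N^{(\alpha)}[h]$ (with $h$ chosen afterwards, large enough to accommodate the construction) that are statistically indistinguishable from $n$ samples — e.g.\ one distribution places most mass near a block of $\sim n$ low-probability atoms, the other perturbs this so as to change the entropy by a definite amount while keeping total variation distance $O(1/n)$ so that no test separates them with constant probability. The separation in entropy should be of order $h/\log^{\alpha-1}n$: atoms of mass $\sim 1/(n\log^{\alpha-1}n)$-ish contribute $|\log|^\alpha$ of order $\log^\alpha n$ each to the moment budget, so $h$ worth of moment buys $\sim h/\log^\alpha n$ total such mass, and moving it changes entropy by $\sim\log n$ times that, i.e.\ $\sim h/\log^{\alpha-1}n$. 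Standard Le Cam then gives $\risk_n^{(\alpha)}(h)\ge\tfrac12(\text{entropy gap})(1-\tv{\bmu_0^{n}-\bmu_1^{n}})$, and tuning constants delivers $h/(4\cdot 3^\alpha\log^{\alpha-1}n)$. The delicate point is simultaneously (i) keeping the two laws close enough in total variation after $n$-fold tensorization — which forces the per-atom perturbation to be $O(1/n)$ and pins down how many atoms and at what scale — and (ii) respecting the moment constraint $\ent[\alpha]\le h$ exactly, with the $3^\alpha$ absorbing the slack; I'd expect the bookkeeping of these two competing constraints (and choosing $h$ as a function of $n,\alpha$ to make them consistent) to be the part requiring the most care, though no deep idea beyond careful Le Cam.

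Overall, neither direction needs machinery beyond the truncation-by-information-moment idea already used in Theorem~\ref{thm:dimfree} and Corollary~\ref{cor:adaptive-hybrid-bound} (for the upper bound) and a textbook Le Cam two-point method (for the lower bound); the work is in the constants. The single hardest step, I anticipate, is the exact $(1+\log n)/\sqrt n$ head-term bound in (a), which relies on the sharp finite-alphabet plug-in entropy risk estimate rather than a crude one.
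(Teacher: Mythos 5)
In part (a) your key quantitative step fails as stated. You cut at mass level $1/n$ (equivalently $k\asymp n$ head symbols) and assert that on the head, ``effectively a finite alphabet of size $k$,'' the plug-in has $L_1$ risk $O((\log k)/\sqrt n)$. That is false for $k\asymp n$: the plug-in bias on an alphabet of size $k$ is of order $k/n$ (Paninski's bound is $\log(1+(k-1)/n)$, and for the uniform distribution on $n$ symbols with $n$ samples the bias is $\Theta(1)$), so with your cutoff the head contributes a non-vanishing constant rather than $(1+\log n)/\sqrt n$. The paper instead collapses all symbols of mass $<\eps$ into a single atom (data processing), applies Paninski's bias bound $\log(1+\frac{1}{\eps n})$ to the collapsed distribution of support $1+1/\eps$, bounds the discarded tail entropy by $(\log\frac1\eps)^{1-\alpha}\ent[\alpha](\bmu)$, and controls the fluctuation once and for all via the Antos--Kontoyiannis variance bound $\Var{\ent(\empdist)}\le\log^2 n/n$, which holds for any discrete $\bmu$. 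The right balance is $\eps=1/\sqrt n$, not $1/n$; this choice is precisely the source of the constant, since $(\log\frac1\eps)^{1-\alpha}=(\tfrac12\log n)^{1-\alpha}=2^{\alpha-1}\log^{1-\alpha}n$, while $\log(1+\frac{1}{\eps n})\le 1/\sqrt n$ supplies the extra $1/\sqrt n$. Your attribution of the $2^{\alpha-1}$ to ``handling the empirical tail the same way'' is also off target: because $\E{\ent(\empdist)}\le\ent(\bmu)$ (concavity), only a lower bound on the mean is needed, the empirical tail entropy is never controlled, and a symmetric two-tail treatment would at best cost an extra factor of $2$ and require a separate fluctuation bound for the partial head sum.

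Part (b) in the paper is not a total-variation two-point argument at all, so your route is genuinely different --- and it has a gap for small $\alpha$. The paper compares $\Unif([n^3])$ with uniform distributions on random $n^2$-subsets of $[n^3]$, conditions on the birthday event that the $n$ samples are distinct (probability at least $1/2$ under both), and observes that conditionally the sample laws coincide exactly by symmetry; this gives an entropy gap of a full $\log n$ with $h=3^\alpha\log^\alpha n$, whence $\log n/4 = h/(4\cdot3^\alpha\log^{\alpha-1}n)$. Your Le Cam plan forces the perturbed mass to be $\delta\lesssim 1/n$, hence only very small $h\asymp\delta\log^\alpha n$; that can still meet the literal ``there exists $h$'' claim for $\alpha\ge1$ (modulo checking the constant $4\cdot3^\alpha$), but for $\alpha<1$ your bookkeeping breaks: the heavy atom contributes $(1-\delta)\abs{\log(1-\delta)}^\alpha\asymp\delta^\alpha\gg\delta\log^\alpha n$ to $\ent[\alpha]$, so with atoms at scale $\sim 1/(n\,\mathrm{poly}\log n)$ the achievable entropy gap is far smaller than $h/\log^{\alpha-1}n$. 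Repairing this requires spreading the $\delta$ mass over atoms of doubly small (super-polynomially small) probability, in the spirit of the construction in Theorem~\ref{thm:no-emp}, which your sketch does not anticipate; the paper's mixture-plus-birthday construction sidesteps all of this and works uniformly in $\alpha>0$.
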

\paragraph{Open problem.}
Close the gap in the dependence on
$\alpha$ in the upper and lower bounds.\\
\paragraph{Open problem.}
Another gap between the upper and lower bounds
is the quantified on $h$:
in the upper bound, it is ``for all'', while
in the lower bound, it is ``exists''.
Closing this gap is also of interest.
\\

Finally, in Section~\ref{sec:conjectures},
we resolve most of the conjectures
posed by \citet{jurgensen2010entropy}.

\section{Related work}
\label{sec:rel-work}

\paragraph*{Continuity, convergence, moments of information}
\citet{Zhang07} gave a
sharpened version of
(\ref{eq:CT})
and
\citet{ho2010interplay}
presented analogous bounds;
\citet{Audenaert07}
proved a non-commutative generalization.
\citet[Theorem 5]{DBLP:journals/tit/Sason13}
upper-bounds $|\ent(\bmu)-\ent(\bnu)|$
in terms of quantities related to $\nrm{\bmu-\bnu}_1$,
where (at most) one of them
is allowed to have infinite support.
Even though $\ent(\cdot)$ is not
continuous on $\Delta_\N$,
the plug-in estimate $\ent(\empdist)$
converges to $\ent(\bmu)$
almost surely 
and in $L_2$ 
\citep{antos2001convergence}.
\citet{silva2018shannon} studied a variety of restrictions on 
distributions over infinite alphabets
to derive strong consistency results and rates of convergence. Moments of information were apparently first defined in \citet{1053843}.

\paragraph*{Entropy estimation}
Recent surveys of entropy estimation
results may be found in 
\citet{jiao2015minimax,verdu2019empirical}. 
The 
finite-alphabet case
is particularly well-understood. 
For fixed 
alphabet
size
$d<\infty$, the plug-in estimate
is asymptotically minimax optimal
\citep{paninski2003estimation}.
\citet{paninski2004estimating}
non-constructively
established the existence
of
a sublinear (in $d$) entropy estimator.
The optimal dependence on
$d$ (at fixed accuracy) 
was settled by 
\citet{valiant2011clt,valiant2017estimating}
as being $\Theta(d/\log d)$.

The $\Theta(d/\log d)$
dependence on the 
alphabet
size
is also relevant 
in the so-called
{\em high dimensional} asymptotic
regime, where $d$ grows with $n$.
Here, the plug-in estimate is no longer optimal, and more sophisticated
techniques are called for
\citep{valiant2011clt,valiant2011power,valiant2017estimating}. 
The works of \citet{wu2016minimax,jiao2015minimax,
han2015adaptive,
jiao2017maximum} characterized the minimax rates 
for the high-dimensional regime:
a small additive error of $\eps$
requires $\Theta(d/\eps\log d)$
samples.
Building off of these polynomial-approximation based constructions, \citet{acharya2017unified} design an additional optimal estimator, this one based on a profile maximum likelihood approach that can also estimate a variety of other important statistics.
\citet{fukuchi2017minimax,fukuchi2018minimax} generalize the optimal estimators to estimate any additive functional, recovering in particular the optimal rates for entropy.
\citet{acharya2019estimating} modify these optimal estimators with the added goal of low space complexity.

Finally, there is the 
infinite-alphabet case.
Although here the plug-in estimate is
again universally
strongly
consistent,
control of the convergence rate
requires some assumption on the
sampling distribution
--- and
\citet{antos2001convergence}
compellingly argue that moment assumptions
are natural and minimalistic.
Absent any prior assumptions,
the 
$L_1$
(and hence 
$L_2$) convergence rate
of {\em any} estimator
can be made arbitrarily slow
(Theorem 4 ibid.).
The present paper proves
a variant of this result
(see Theorem~\ref{thm:no-emp}
and the Remark following it).
\citet{antos2001convergence} 
further show that
even under moment assumptions, there is no polynomial rate of convergence for the plug-in estimate:
there is no $\beta>0$ 
such that its risk decays as
$O(n^{-\beta})$. 
\citet{wyner2003lower} showed that
the plug-in estimate achieves
a rate of
$O(\frac{1}{\log n})$
for bounded second moment,
and this is minimax optimal.
\citet{DBLP:conf/soda/BrautbarS07}
exhibited a function of the 
higher moments
that can be used in place of %
alphabet
size to give a multiplicative approximation to the entropy.

The empirical nature of Corollary~\ref{cor:adaptive-hybrid-bound} can be seen as a distribution-dependent improvement over otherwise worst-case minimax guarantees. It can be compared, in this light, to the ``instance-optimality" program of \citet{hao2018unified,hao2020data} and the adaptive guarantees of \citet{han2015adaptive}.

\section{Proofs}
\label{sec:proofs}

\subsection{Proof of Theorem~\ref{thm:dimfree}}

We begin with a subadditivity result for
the $\alpha$th moment of information
(which we state for $\alpha>0$,
even though only the range $\alpha>1$
will be needed).

\begin{lemma}
\label{lemma:decomposition-entropy-of-difference-measure}
For $\alpha > 0$ and $\bmu, \bnu \in \Delta^{(\alpha)}_{\N}$,
we have
\beqn
\label{eq:subadd-crude}
\ent[\alpha](\abs{\bmu - \bnu}) 
&\le&
2\alpha^\alpha + \ent[\alpha](\bmu) + \ent[\alpha](\bnu).
\eeqn
If, additionally,
$\nrm{\bmu-\bnu}_\infty\le\mathe^{-\alpha}$,
then
\beqn
\label{eq:subadd-sharp}
\ent[\alpha](\abs{\bmu - \bnu}) 
&\le&
2\mathe^\alpha
\nrm{\bmu-\bnu}_\infty
\log^\alpha\frac{1}{\nrm{\bmu-\bnu}_\infty}
+ \ent[\alpha](\bmu) + \ent[\alpha](\bnu).
\eeqn
\end{lemma}
\begin{proof}
Define $\loch[\alpha] \colon [0,1] \to \R_+$
by $z \mapsto z \log^{\alpha}(1 /z)$, 
where 
$\loch[\alpha](0)=0$.
The function
$\loch[\alpha]$ is increasing on $[0, \mathe^{- \alpha}]$ and decreasing on $[\mathe^{-\alpha}, 1]$.
The maximum is therefore achieved at $z = \mathe^{- \alpha}$, and 
\beqn
\label{eq:maxhalpha}
\max_{z \in [0, 1]} \loch[\alpha](z) = \loch[\alpha](\mathe^{- \alpha}) = \mathe^{-\alpha} \alpha^\alpha.
\eeqn
Now decompose $\ent[\alpha]$:
\beqn
\label{eq:entdiffdecomposition}
\ent[\alpha](\abs{\bmu - \bnu}) 
&=& 
\sum_{
i:\bmu(i) \lor \bnu(i) > \mathe^{-\alpha} 
} \loch[\alpha]( \abs{\bmu(i) - \bnu(i)}) 
 + \sum_{
i:\bmu(i) \lor \bnu(i) \leq \mathe^{-\alpha} 
} \loch[\alpha](\abs{ \bmu(i) - \bnu(i)}). 
\eeqn
To prove the lemma, we bound the two terms of \eqref{eq:entdiffdecomposition} separately. The second term can be bound in two ways, yielding \eqref{eq:subadd-crude} and \eqref{eq:subadd-sharp}, respectively. To bound
the first term of \eqref{eq:entdiffdecomposition}, 
notice that
$\bmu \in \Delta_\N$
implies
that $\abs{ \set{i \in \N \colon \bmu(i) > \mathe^{-\alpha}} } \leq \mathe^{\alpha}$, 
and similarly for $\bnu$. 
Thus,
\beqn
\label{eq:1st-term-subadd}
\sum_{
i:\bmu(i) \lor \bnu(i) > \mathe^{-\alpha} 
} \loch[\alpha]( \abs{\bmu(i) - \bnu(i)})
&\le&
\big(\abs{ \set{i 
\colon \bmu(i) > \mathe^{-\alpha}} } 
+ \abs{ \set{i 
\colon \bnu(i) > \mathe^{-\alpha}} } \big) \max_{z \in [0,1]} \loch[\alpha](z) 
\\&\le&  
\nonumber
2\mathe^{\alpha} \mathe^{-\alpha} \alpha^\alpha = 2 \alpha^\alpha
.
\eeqn
For the second term of \eqref{eq:entdiffdecomposition}, notice that when $\bmu(i) \lor \bnu(i) \leq \mathe^{-\alpha}$, 
the monotonicity of $\loch[\alpha]$
implies
\beq
\loch[\alpha]( \abs{\bmu(i) - \bnu(i)}) \leq \loch[\alpha]( \bmu(i) \lor \bnu(i)),
\eeq
and hence
\beq
\sum_{
i \in \N :
\bmu(i) \lor \bnu(i) \leq \mathe^{-\alpha} 
} 
\loch[\alpha](\abs{ \bmu(i) - \bnu(i)})
&\leq& 
\sum_{
i: 
\in 
\bmu(i) \lor \bnu(i) \leq \mathe^{-\alpha} 
} 
\loch[\alpha]( \bmu(i) \lor \bnu(i)) \\
&\leq& 
\sum_{
i: 
\in 
\bmu(i) \lor \bnu(i) \leq \mathe^{-\alpha} 
} 
\loch[\alpha]( \bmu(i)) + \loch[\alpha]( \bnu(i)) \\
&\leq& \ent[\alpha](\bmu) + \ent[\alpha](\bnu);
\eeq
this proves \eqref{eq:subadd-crude}.
Given the additional condition 
$\nrm{\bmu-\bnu}_\infty\le\mathe^{-\alpha}$,
to prove \eqref{eq:subadd-sharp},
put $\eps=\nrm{\bmu-\bnu}_\infty$
and
modify \eqref{eq:1st-term-subadd} as
follows:
\beq
\sum_{
i:\bmu(i) \lor \bnu(i) > \mathe^{-\alpha} 
} \loch[\alpha]( \abs{\bmu(i) - \bnu(i)})
&\le&
\big(\abs{ \set{i 
\colon \bmu(i) > \mathe^{-\alpha}} } 
+ \abs{ \set{i 
\colon \bnu(i) > \mathe^{-\alpha}} } \big) \loch[\alpha](\eps)
\\
&\le&
2\mathe^\alpha\loch[\alpha](\eps).
\eeq
The latter holds since
$\abs{\bmu(i) - \bnu(i)}
\le
\nrm{\bmu-\bnu}_\infty
$,
and so
$\loch[\alpha]( \abs{\bmu(i) - \bnu(i)})
\le
\loch[\alpha](\eps)
$,
by $\loch[\alpha]$'s
monotonicity on $[0,\mathe^{-\alpha}]$.
\end{proof}

\begin{proof}[Proof of Theorem~\ref{thm:dimfree}]
The concavity argument in the proof of 
\citet[Theorem~17.3.3]{cover2001elements}, 
immediately implies
\beq
\abs{ \ent(\bmu) -  \ent(\bnu)} 
&\leq& \ent(\abs{\bmu - \bnu}).
\eeq
Then, via an application of H\"older's inequality,
\beq
\ent(\abs{\bmu - \bnu})
&=& \sum_{i\in\N}\abs{\bmu(i) - \bnu(i)}\log \frac{1}{\abs{\bmu(i) - \bnu(i)}} \\
&=& \sum_{i\in\N} \abs{\bmu(i) - \bnu(i)}^{1-1/\alpha}\cdot\abs{\bmu(i) - \bnu(i)}^{1/\alpha} 
\log \frac{1}{\abs{\bmu(i) - \bnu(i)}}\\
&\le&
\paren{\sum_{i \in \N} \left(\abs{\bmu(i) - \bnu(i)}^{1-1/\alpha}\right)^{1/(1-1/\alpha)}
}^{1-1/\alpha} 
\paren{\sum_{i \in \N} \left(\abs{\bmu(i) - \bnu(i)}^{1/\alpha} \log{ \frac{1}{\abs{\bmu(i) - \bnu(i)}}}\right)^\alpha}^{1/\alpha}\\
&=& \nrm{\bmu - \bnu}_1^{1-1/\alpha}
\ent[\alpha](\abs{\bmu - \bnu})^{1/\alpha}.
\eeq
The claim follows by
invoking Lemma~\ref{lemma:decomposition-entropy-of-difference-measure}
and the subadditivity of $t\mapsto t^{1/\alpha}$
for $t\ge0$ and $\alpha>1$.
\end{proof}

\subsection{Proof of Theorem~\ref{thm:dimfree-lb}}

Let
$0<\eps
\le
1/2
$
be given and
choose
$\bmu,\bnu\in
\Delta_\N
$
as follows:
$\bmu=(1,0,0,\ldots)$
and $\bnu=(1-\eps,\eps,0,0,\ldots)$.
Then 
left-hand side
of \eqref{eq:lb-tight}
is $L(\eps)=\ent(\nu)$:
\beq
L(\eps) &=&
(1-\eps)\log\frac{1}{1-\eps}
+
\eps\log\frac{1}{\eps} =:
L_1(\eps)+L_2(\eps)
,
\eeq
and
right-hand side
of \eqref{eq:lb-tight},
without the constant $c$,
is
\beq
R(\eps) &=&
\eps^{1-1/\alpha}
\paren{
2\mathe\eps^{1/\alpha}\log\frac1\eps
+
\paren{
(1-\eps)\abs{\log\frac{1}{1-\eps}}^\alpha
+
\eps\abs{\log\frac{1}{\eps}}^{\alpha}
}^{1/\alpha}
}\\
&\leq&
\eps^{1-1/\alpha}
\paren{
2\mathe\eps^{1/\alpha}\log\frac1\eps
+
2^{1/\alpha} \max \set{
(1-\eps)^{1/\alpha}\log\frac{1}{1-\eps}
,
\eps^{1/\alpha}\log\frac{1}{\eps}
}
}\\
&\leq&
2\eps^{1-1/\alpha}(1-\eps)^{1/\alpha}\log\frac{1}{1-\eps}
+
(2\mathe+2)\eps\log\frac1\eps
\\
&=:&
R_1(\eps)+R_2(\eps)
.
\eeq

Now
$L_1(\eps) \ge R_1(\eps)/2$ 
for $\eps\in(0,\frac{1}{2}]$ and
$L_2(\eps)=
\frac{R_2(\eps)}{2\mathe+2}$,
and
therefore 
$
L_1(\eps)+L_2(\eps)\ge R_1(\eps)/2+\frac{1}{2\mathe+2}R_2(\eps)\ge \frac{1}{2\mathe+2}R(\eps)$.\QED

\subsection{Proof of Theorem~\ref{thm:Lp-pos}}

The following 
fact 
\citep[Theorem 3.5 and Eq. (5) on p. 83]{MR1817225}
will be useful\footnote{
The result is stated for functions
in $f\in L_2(\R^n)$ 
and their
symmetric-decreasing rearrangements
$f^*$,
but the specialization
to discrete distributions is straightforward.
We convert $\bmu$
to a function $f:\R_+\to\R_+$
via $f(x)=\mu(\ceil{x})$
and $\bnu$ to $g(x)$ analogously.
A direct calculation then 
shows
that 
$
\norm{\bmu-\bnu}_p
=
\norm{f-g}_p
$
and
$
\norm{\decr{\bmu}-\decr{\bnu}}_p
=
\norm{f^*-g^*}_p
$, to which the result from
\citet{MR1817225} applies
to yield
(\ref{eq:munudecr}).
}:
\beqn
\label{eq:munudecr}
\norm{\decr{\bmu}-\decr{\bnu}}_p
&\le&
\nrm{{\bmu}-{\bnu}}_p,
\qquad p\in[1,\infty],
\;
\bmu,\bnu\in\Delta_\N.
\eeqn

A result of \citet{MR21585} (more accurately credited to
Riesz, 1928 \citep{MR2728440})
implies that a sequence $\set{\bxi_{n\in\N}}\subset\ell_1(\N)$
converging pointwise to some $\bxi\in\ell_1(\N)$
also converges in $\ell_1$ iff $\nrm{\bxi_n}_1\to\nrm{\bxi}_1$.
This immediately implies
\begin{lemma}
\label{lem:pt-l1}
If $\set{\bmu_{n\in\N}}\subset\Delta_\N$ converges pointwise to some
$\bmu\in\Delta_\N$, then it also converges in $\ell_1$.
\end{lemma}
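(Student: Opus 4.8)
The plan is to reduce immediately to the Riesz-type criterion quoted just above the statement. First I would observe that since $\bmu_n,\bmu\in\Delta_\N$, every coordinate is nonnegative and $\nrm{\bmu_n}_1=\sum_{j\in\N}\bmu_n(j)=1=\nrm{\bmu}_1$ for all $n$; hence the norm-convergence hypothesis $\nrm{\bmu_n}_1\to\nrm{\bmu}_1$ of that criterion holds trivially, the sequence of norms being constant. Combining this with the assumed pointwise convergence $\bmu_n(j)\to\bmu(j)$ for each $j$, the criterion yields $\nrm{\bmu_n-\bmu}_1\to0$, which is exactly the claim.

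For completeness I would also record the self-contained route, which is essentially Scheff\'e's lemma. Applying the elementary identity $|a-b|=a+b-2\min(a,b)$ coordinatewise gives
\[
\nrm{\bmu_n-\bmu}_1=\sum_{j\in\N}\bmu_n(j)+\sum_{j\in\N}\bmu(j)-2\sum_{j\in\N}\min\{\bmu_n(j),\bmu(j)\}=2-2\sum_{j\in\N}\min\{\bmu_n(j),\bmu(j)\}.
\]
Since $0\le\min\{\bmu_n(j),\bmu(j)\}\le\bmu(j)$ with $\bmu\in\ell_1(\N)$, while $\min\{\bmu_n(j),\bmu(j)\}\to\bmu(j)$ pointwise by hypothesis, the dominated convergence theorem gives $\sum_{j\in\N}\min\{\bmu_n(j),\bmu(j)\}\to\sum_{j\in\N}\bmu(j)=1$, whence $\nrm{\bmu_n-\bmu}_1\to0$.

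The only point requiring care — the ``main obstacle,'' such as it is — is the interchange of the limit with the infinite sum: pointwise convergence by itself does not force $\ell_1$ convergence in general, so one must exploit that the total mass is the same constant $1$ for $\bmu_n$ and $\bmu$ (equivalently, that $\bmu$ dominates the coordinatewise minima). Both routes above do precisely this, and no further estimates are needed.
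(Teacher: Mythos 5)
Your first argument is exactly the paper's proof: the lemma is an immediate consequence of the Riesz criterion quoted just above it, since all the $\ell_1$ norms equal $1$. The Scheff\'e-style argument you add is a correct self-contained alternative, but the main route is the same as the paper's, so nothing further is needed.
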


\citet[Lemma 1]{BKZ17} showed that
$\Delta_\N^{\downarrow(1)}[h]$
is compact under $\ell_1$.
We begin by extending
this result
to
general $\alpha,p$.
\begin{lemma}
  \label{lem:compact}
  For all $\alpha\ge1$, $p\in[1,\infty]$, and $h>0$,
  the set $\Delta_\N^{\downarrow(\alpha)}[h]$ is compact under $\ell_p$.
\end{lemma}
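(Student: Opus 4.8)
\textbf{Proof plan for Lemma~\ref{lem:compact}.}

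The plan is to establish sequential compactness of $\Delta_\N^{\downarrow(\alpha)}[h]$ in $\ell_p$ via a diagonal extraction argument followed by a uniform-tail estimate. First I would take an arbitrary sequence $\set{\bmu_k}\subset\Delta_\N^{\downarrow(\alpha)}[h]$. Since each $\bmu_k$ lies in $[0,1]^\N$, a Cantor diagonal argument produces a subsequence (not relabeled) converging pointwise to some $\bmu\in[0,1]^\N$; monotonicity of each $\bmu_k$ passes to the limit, so $\bmu\in\Delta_\N^\downarrow$ as a sequence, though a priori only $\sum_j\bmu(j)\le 1$ by Fatou. The two things that must still be checked are (i) $\bmu$ is genuinely a probability distribution with $\ent[\alpha](\bmu)\le h$, and (ii) the convergence upgrades from pointwise to $\ell_p$.

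For the mass and moment bound, the key tool is that membership in $\Delta_\N^{\downarrow(\alpha)}[h]$ forces a \emph{uniform} tail decay: for a non-increasing $\bmu$ with $\ent[\alpha](\bmu)\le h$, the $j$-th coordinate satisfies $j\,\decr\bmu(j)\le\sum_{i\le j}\bmu(i)\le 1$, hence $\bmu(j)\le 1/j$, and more importantly the information-moment constraint bounds the tail $\sum_{j>J}\bmu(j)$ by a quantity depending only on $h,\alpha$ and tending to $0$ as $J\to\infty$ --- concretely, for $j$ large enough that $\bmu(j)\le\mathe^{-\alpha}$ one has $\bmu(j)\le \bmu(j)\abs{\log\bmu(j)}^\alpha/\abs{\log\bmu(j)}^\alpha$ and $\abs{\log\bmu(j)}\ge\log j$, so $\sum_{j>J}\bmu(j)\le h/\log^\alpha J$. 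This uniform tail bound, applied to the $\bmu_k$, shows no mass escapes to infinity, so $\sum_j\bmu(j)=\lim_k\sum_{j\le J}\bmu_k(j)+o_J(1)=1$; thus $\bmu\in\Delta_\N$. Then Lemma~\ref{lem:pt-l1} immediately promotes pointwise convergence to $\ell_1$ convergence, and since $\ell_1$ dominates every $\ell_p$ norm for $p\in[1,\infty]$, we get $\nrm{\bmu_k-\bmu}_p\to0$ for all $p\in[1,\infty]$ as well. Finally, lower semicontinuity of $\ent[\alpha]$ under pointwise convergence (Fatou applied to the nonnegative terms $\bmu(j)\abs{\log\bmu(j)}^\alpha$, using continuity of $z\mapsto z\abs{\log z}^\alpha$ on $[0,1]$) gives $\ent[\alpha](\bmu)\le\liminf_k\ent[\alpha](\bmu_k)\le h$, so the limit lies in $\Delta_\N^{\downarrow(\alpha)}[h]$, closing the compactness argument.

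The main obstacle I anticipate is the uniform-tail step and the accompanying lower semicontinuity of $\ent[\alpha]$: one must be careful that the function $\loch[\alpha](z)=z\ln^\alpha(1/z)$ is \emph{not} monotone on all of $[0,1]$ (it rises then falls, with max $\mathe^{-\alpha}\alpha^\alpha$, exactly as in Lemma~\ref{lemma:decomposition-entropy-of-difference-measure}), so the bound $\sum_{j>J}\bmu(j)\le h/\log^\alpha J$ is only valid once $J$ is large enough that the tail entries have entered the decreasing regime $\bmu(j)\le\mathe^{-\alpha}$ --- which is guaranteed for $J\ge\mathe^\alpha$ by $\bmu(j)\le 1/j$. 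Handling the finitely many ``head'' coordinates separately (they contribute at most $\mathe^\alpha\cdot\mathe^{-\alpha}\alpha^\alpha$ to $\ent[\alpha]$ regardless) keeps the estimates uniform over the whole family. Once this tightness is in hand, the rest is the standard ``pointwise $+$ tightness $\Rightarrow$ norm convergence'' mechanism packaged by Lemma~\ref{lem:pt-l1}. Note that $\alpha=1$ is included here without difficulty (compactness, unlike continuity of $\ent$, does not require $\alpha>1$), recovering \citet[Lemma 1]{BKZ17} as the special case $\alpha=p=1$.
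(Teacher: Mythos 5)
Your proposal is correct and follows essentially the same route as the paper's proof: extract a pointwise-convergent subsequence, use monotonicity ($\bmu(j)\le 1/j$) together with the $\alpha$th-moment constraint to rule out mass escaping to infinity, verify $\ent[\alpha](\bmu)\le h$ in the limit, and then upgrade to $\ell_1$ (hence $\ell_p$) convergence via Lemma~\ref{lem:pt-l1}. The only differences are cosmetic --- you phrase the no-escape step as an explicit uniform tail bound $h/\log^{\alpha}J$ and the moment bound as Fatou/lower semicontinuity, whereas the paper argues both by contradiction with finite partial sums --- and your worry about the non-monotone region of $\loch[\alpha]$ is unnecessary for the tail estimate, which only needs $\abs{\log\bmu(j)}\ge\log J$.
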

\noindent Remark.
This is quite false if 
either
the non-increasing 
or the bounded-entropy
condition
is omitted. 
For a counterexample to the former,
consider
the sequence $\bmu_n\in\Delta_\N$
defined by $\bmu_n(i)=\pred{i=n}$.
For a counterexample to the latter,
consider
the sequence $\bmu_n\in\Delta_\N$,
where $\bmu_n$ is uniform on $[n]$.
\begin{proof}
We closely follow the proof strategy of
\citet[Lemma 1]{BKZ17}.
In a metric space, compactness and sequential compactness are equivalent.
Let $\bmu_{n\in\N}$ be a sequence in
$\Delta_\N^{\downarrow(\alpha)}[h]$.
Since $[0,1]$ is compact, every $\set{\bmu_n(i):n\in\N}$
has a convergent subsequence, and hence
$\bmu_{n\in\N}$ has a pointwise convergent subsequence.
There is thus no loss of generality in assuming that
$\bmu_n\to\bmu$ pointwise. Obviously, $\bmu$ is non-negative and
non-increasing.
It remains to show that
\begin{itemize}
\item[(a)] $\sum_{i\in\N}\bmu(i)=1$,
\item[(b)] $\ent[\alpha](\bmu)\le h$,
\item[(c)] $\nrm{\bmu_n-\bmu}_p\to0$.
\end{itemize}

To show (a), assume, for a contradiction, that
$\sum_{i\in\N}\bmu(i)>1$. Then there must be
an $i_0\in\N$ such that $\sum_{i=1}^{i_0}\bmu(i)>1$.
But the latter must then hold for all $\bmu_n$ with $n$
sufficiently large, which contradicts $\bmu_n\in\Delta_\N$.
Now assume
$\eps:=1-\sum_{i\in\N}\bmu(i)>0$.
For any $i_0\in\N$, we have
$\sum_{i=1}^{i_0}\bmu_n(i)<1-\eps/2$
for all sufficiently large $n$.
Now every $\bnu\in
\Delta_\N^{\downarrow}
$ satisfies
$
\bnu(i)\le \frac1i(\bnu(1)+\bnu(2)+\ldots+\bnu(i))\le
\frac1i
.
$
Hence,
\beq
\sum_{i=i_0+1}^\infty\bmu_n(i)\abs{\log\bmu_n(i)}^\alpha
\ge
\sum_{i=i_0+1}^\infty\bmu_n(i)(\log i_0)^\alpha
>
\frac\eps2(\log i_0)^\alpha.
\eeq
Choosing $i_0$ sufficiently large makes the latter expression exceed
$h$, violating the assumption $\bmu_n\in
\Delta_\N^{\downarrow(\alpha)}[h]
$. Thus (a) holds.

To show (b), assume for a contradiction that
$\ent[\alpha](\bmu)>h$ --- and, in particular,
$\sum_{i=1}^{i_0}
\bmu(i)\abs{\log\bmu(i)}^\alpha
>h$ for some $i_0\in\N$.
But the latter must
hold for all $\bmu_n$ with $n$ sufficiently large,
a contradiction.

Finally, to show (c), 
we invoke Lemma~\ref{lem:pt-l1}:
if $\set{\bmu_{n\in\N}}\subset\Delta_\N$ converges pointwise to some
$\bmu\in\Delta_\N$, then it also converges in $\ell_1$.
Since $\ell_1$ dominates every $\ell_p$, $p>1$, this proves (c).
\end{proof}

Next, we examine the continuity of $\ent(\cdot)$ on
$\Delta_\N^{\downarrow(\alpha)}[h]
$ under $\ell_p$.

\begin{lemma}
\label{lem:Hcont}
Fix $h>0$, $\alpha>1$, and $p\in[1,\infty]$.
If $\set{\bmu_{n\in\N}}\subset
\Delta_\N^{\downarrow(\alpha)}[h]
$ converges in $\ell_p$,
then its limit is some $\bmu\in\Delta_\N^{\downarrow(\alpha)}[h]$
and 
furthermore,
$\ent(\bmu_n)\to\ent(\bmu)$.
In other words, $\ent(\cdot)$
is continuous on
$\Delta_\N^{\downarrow(\alpha)}[h]
$
under $\ell_p$.
\end{lemma}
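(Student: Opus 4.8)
The plan is to reduce continuity of $\ent(\cdot)$ on $\Delta_\N^{\downarrow(\alpha)}[h]$ under $\ell_p$ to the already-established $\ell_1$ bound of Theorem~\ref{thm:dimfree}, by exploiting the non-increasing structure to upgrade $\ell_p$ convergence to $\ell_1$ convergence on this set. First I would handle the easy part: if $\set{\bmu_n}\subset\Delta_\N^{\downarrow(\alpha)}[h]$ converges in $\ell_p$ to some $\bmu$, then it converges pointwise (since $\ell_p$ convergence forces coordinatewise convergence), and then Lemma~\ref{lem:compact} — or rather its proof — shows the pointwise limit $\bmu$ lies in $\Delta_\N^{\downarrow(\alpha)}[h]$ and moreover (via Lemma~\ref{lem:pt-l1}) that $\bmu_n\to\bmu$ in $\ell_1$. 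So the real content is: $\ell_p$-convergence on $\Delta_\N^{\downarrow(\alpha)}[h]$ already implies $\ell_1$-convergence, via compactness plus the fact that pointwise limits are automatically $\ell_1$-limits here.

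Given $\bmu_n\to\bmu$ in $\ell_1$, with $\bmu_n,\bmu\in\Delta_\N^{(\alpha)}[h']$ for $h'$ slightly larger than $h$ (in fact $h'=h$ suffices since $\ent[\alpha](\bmu)\le h$ by part (b) of Lemma~\ref{lem:compact}), I would invoke Theorem~\ref{thm:dimfree}: for $n$ large enough that $\nrm{\bmu_n-\bmu}_\infty<1/2$ (which follows from $\ell_1$-convergence), we get
\beq
\abs{\ent(\bmu_n)-\ent(\bmu)}
&\le&
\nrm{\bmu_n-\bmu}_1^{1-1/\alpha}\paren{2\alpha+\ent[\alpha](\bmu_n)^{1/\alpha}+\ent[\alpha](\bmu)^{1/\alpha}}\\
&\le&
\nrm{\bmu_n-\bmu}_1^{1-1/\alpha}\paren{2\alpha+2h^{1/\alpha}}.
\eeq
Since $\alpha>1$ the exponent $1-1/\alpha$ is positive, so the right-hand side tends to $0$ as $\nrm{\bmu_n-\bmu}_1\to0$. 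This gives $\ent(\bmu_n)\to\ent(\bmu)$ and hence continuity on $\Delta_\N^{\downarrow(\alpha)}[h]$ under $\ell_p$, completing the proof.

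The main obstacle is the bookkeeping around the uniform moment bound: Theorem~\ref{thm:dimfree} requires both $\bmu_n$ and the limit $\bmu$ to belong to $\Delta_\N^{(\alpha)}$, and for the quantitative bound to be uniform we need $\ent[\alpha](\bmu_n)$ and $\ent[\alpha](\bmu)$ to be controlled — here by $h$ — throughout the sequence. The $\bmu_n$ are in $\Delta_\N^{\downarrow(\alpha)}[h]$ by hypothesis, and the limit satisfies $\ent[\alpha](\bmu)\le h$ by the lower-semicontinuity-type argument already carried out in part (b) of the proof of Lemma~\ref{lem:compact} (a Fatou-style argument: truncated sums of $\bmu$ are limits of truncated sums of $\bmu_n$). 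So no new difficulty arises — the only genuinely substantive input is the passage from $\ell_p$ to $\ell_1$ convergence, which is precisely what the non-increasing rearrangement structure buys us through Lemma~\ref{lem:pt-l1}. I would also note in passing that the $\alpha=1$ case genuinely fails, which is why the hypothesis $\alpha>1$ is essential in the final application of Theorem~\ref{thm:dimfree}.
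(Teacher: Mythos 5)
Your proposal is correct and follows essentially the same route as the paper: use Lemma~\ref{lem:compact} (and the pointwise-convergence argument in its proof) to place the limit in $\Delta_\N^{\downarrow(\alpha)}[h]$, upgrade to $\ell_1$ convergence via Lemma~\ref{lem:pt-l1}, and conclude with the quantitative bound of Theorem~\ref{thm:dimfree}. Your additional bookkeeping (the eventual $\nrm{\bmu_n-\bmu}_\infty<1/2$ condition and the uniform bound $\ent[\alpha]\le h$ on both arguments) just makes explicit what the paper's terser proof leaves implicit.
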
  
\noindent Remark.
We note that $\ent(\cdot)$ is not
continuous on 
$\Delta_\N^{\downarrow(1)}[h]$
under $\ell_p$, $p\in[1,\infty]
$,
as evidenced by the sequence $\bmu_n=(1-\eps_n,\eps_n/n,\ldots,\eps/n,0,0,\ldots)$,
with support size $n+1$.
We can choose $\eps_n$ so that $\ent(\bmu_n)=h$,
but of course the limiting $\bmu$ has $\ent(\bmu)=0$
(see Example 1 in \citet{BKZ17}).
\begin{proof}
It follows from Lemma~\ref{lem:compact}
that 
the limiting
  $\bmu$ belongs to $\Delta_\N^{\downarrow(\alpha)}[h]$.
  Further,
  Lemma~\ref{lem:pt-l1}
implies that $\bmu_n\to\bmu$ in $\ell_1$.
  Invoking the continuity result
  in Theorem~\ref{thm:dimfree}
  proves the claim. 
\end{proof}

\begin{proof}[Proof of Theorem~\ref{thm:Lp-pos}]
  It follows from Lemma~\ref{lem:Hcont}
  that
$\ent(\cdot)$
is continuous on
$\Delta_\N^{\downarrow(\alpha)}[h]
$
under $\ell_p$.
Since, by Lemma~\ref{lem:compact},
$\Delta_\N^{\downarrow(\alpha)}[h]
$
is compact under
$\ell_p$, it follows that $\ent(\cdot)$
is uniformly continuous on 
$\Delta_\N^{\downarrow(\alpha)}[h]
$: there is a function $F$
such that
\beq
\abs{\ent(\bmu) - \ent(\bnu)}
&\leq&
F(\nrm{\bmu-\bnu}_p,h,\alpha,p),
\qquad
\bmu,\bnu\in
\Delta_\N^{\downarrow(\alpha)}[h]
\eeq
and
$\eps_n:=\nrm{\bmu_n-\bnu_n}_p\to0$
$\implies$
$F(\eps_n,h,\alpha,p)\to0$.
Now, for all 
$\bmu,\bnu\in\Delta_\N^{(\alpha)}[h]$
we have
\beq
\abs{\ent(\bmu) - \ent(\bnu)}
=
\abs{\ent(\decr{\bmu}) - \ent(\decr{\bnu})}
\le
F(\norm{\decr{\bmu}-\decr{\bnu}}_p,h,\alpha,p).
\eeq
It follows from (\ref{eq:munudecr})
that
$\nrm{\bmu_n-\bnu_n}_p\to0
\implies
\norm{\decr{\bmu}_n-\decr{\bnu}_n}_p\to0$,
which concludes the proof.
\end{proof}

\subsection{Proof of %
Corollary~\ref{cor:adaptive-hybrid-bound}
}

\begin{proof}
Fix $0 < \eps <\mathe^{-\alpha}$. 
Consider two potential ``bad'' events:
$B_1$, where
$\norm{\empdist-\bmu}_\infty> \eps$,
and $B_2$, where
$
\nrm{\bmu-\empdist}_1
>
\frac{2\nrm{\empdist}_{1/2}^{1/2}}{\sqrt{n}} + 
6 \sqrt{\frac{\log(4/\delta)}{2n}}.
$
Our assumption on the sample size $n$,
together with the
Dvoretzky-Kiefer-Wolfowitz inequality
\citep{MR1062069}, implies that
$\PR{
B_1
}\le\delta/2$
and
(\ref{eq:ckw20})
implies that
$\PR{
B_2
}\le\delta/2$.
Thus, with probability at least
$1-\delta$, neither of $B_1$ or $B_2$
occurs, and on the event where $\nrm{\bmu-\empdist}_\infty < \eps <\mathe^{-\alpha}$, we may invoke
Theorem~\ref{thm:dimfree},
from which the claims immediately follow.
\end{proof}

\subsection{Proof of Theorem~\ref{thm:no-emp}}

\newcommand{\oh}{h}

For $\oh > 1$ and $n \in \N, n \geq 2$,
put
$a_n = (1 - 1/(2n))\log(1 - 1/(2n))$
and
define the support size $S=S(h,n)$ by
$
S = \floor{(1/2n) \exp ( 2n (\oh + a_n) )}.
$
Consider the distributions $\bmu_0 = (1, 0, 0, \dots)$ and $\bmu_{n}$
defined by
$\bmu_n (1) = 1 - 1/(2n)$, and 
\beq
\bmu_n(i) = \frac{1}{2 n S},
\qquad
2 \leq i \leq 1 + S( \oh, n).
\eeq
We compute the Kullback-Leibler divergence and entropy:
\beqn
\label{eq:statistically-close}
\kl{\bmu_0}{\bmu_n} &=& 
\log \frac{1}{1 - 1/(2n)} 
 \le
\frac{1}{1 - 1/(2n)}-1
\leq \frac{1}{n} \\\nonumber
\ent(\bmu_0) &=& 0 \leq \oh
.
\eeqn
For $x \geq 2$, always $\floor{x} \geq x /2$. Additionally, from $2n a_n \geq -1$, and $\frac{1}{2n} \exp (2nh - 1) > 2$, we obtain that
$S > (1/4n) \exp ( 2n (\oh + a_n) )$, hence we also have that $h \geq \ent(\bmu_n) > \oh - \frac{1}{2n} \log 2$. 
Since 
$
\frac{1}{2x}\log 2
\le 1/2
$
on $[1,\infty)$
and 
$\oh > 1$,
it follows that
$\ent(\bmu_n) 
\geq \frac{\oh}{2}$, 
whence
$\abs{\ent(\bmu_0) - \ent(\bmu_n)} 
\geq \oh/2$.
To bound the $L_1$ minimax risk (defined in
(\ref{eq:risk})), we invoke
Markov's inequality:
\beq
\mexp|\hat{H}(X_1, \dots, X_n) - \ent(\bmu)|
&\ge& \frac{\oh}{4} \PR{|\hat{H}(X_1, \dots, X_n) - \ent(\bmu)| >  \frac{\oh}{4}}.
\eeq
It follows via Le Cam's two point method
\citep[Section~2.4.2]{tsybakov2008introduction}
that
\beq
\risk_n^{(1)}(h)
\geq \frac{\oh}{8} \mathe^{-n \kl{\bmu_0}{\bmu}}
\geq \frac{\oh}{8 \mathe},
\eeq
where the second inequality
stems from \eqref{eq:statistically-close}.

\QED

\subsection{Proof of Theorem~\ref{thm:plug-in-minimax}
}

We begin with an auxiliary lemma,
of possible independent interest.
\begin{lemma}
\label{lem:sandwich}
For all $\bmu\in\Delta_\N
$
and $n\in\N$,
we have
\beq
\ent(\bmu)
\ge
\mexp\ent(\empdist) \ge
\ent(\bmu)
-
\inf_{0<\eps<1}
\sqprn{
\sum_{i\in\N:\bmu(i)<\eps}
\bmu(i)\log\frac1{\bmu(i)}
+
\log\paren{1+\frac1{\eps n}}
}.
\eeq
\end{lemma}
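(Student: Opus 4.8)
The plan is to reduce the statement to the classical leave-one-out identity for the expected plug-in entropy, and then to split the resulting bias expression at a threshold $\eps$. Write $p_j\eqdef\bmu(j)$. Starting from the pointwise identity $\ent(\empdist)=-\frac1n\sum_{i=1}^n\log\empdist(X_i)$ and using exchangeability of $X_1,\dots,X_n$, one obtains $\mexp\ent(\empdist)=-\mexp[\log\empdist(X_1)]$; conditioning on $X_1=j$, the number of occurrences of $j$ among $X_2,\dots,X_n$ is $B_j\sim\Bin(n-1,p_j)$, so that $\empdist(j)$ given $\{X_1=j\}$ is distributed as $(1+B_j)/n$. This yields
\[
\ent(\bmu)-\mexp\ent(\empdist)\;=\;\sum_{j\in\N}p_j\,\mexp\!\left[\log\frac{1+B_j}{n p_j}\right],
\]
the interchange of expectation and summation being justified by $0\le-\log\empdist(X_1)\le\log n$ (and one may harmlessly assume $\ent(\bmu)<\infty$, since otherwise the claimed bound is vacuous).

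For the upper inequality $\ent(\bmu)\ge\mexp\ent(\empdist)$, I would invoke the well-known nonpositivity of the plug-in bias: $\empdist$ is a random element of $\Delta_\N$ with $\mexp\empdist=\bmu$, and $\ent(\cdot)$ is concave, so Jensen's inequality (taken over the realizations of $\empdist$, using concavity of $t\mapsto-t\log t$) gives $\mexp\ent(\empdist)\le\ent(\mexp\empdist)=\ent(\bmu)$. Note this is a genuinely \emph{global} fact: the individual summands in the displayed identity need not be nonnegative, so one should not attempt a term-by-term argument here.

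For the lower inequality, fix $\eps\in(0,1)$ and split the displayed sum according to whether $p_j<\eps$ or $p_j\ge\eps$. For a light symbol ($p_j<\eps$), the deterministic bound $1+B_j\le n$ gives $\log\frac{1+B_j}{np_j}\le\log\frac1{p_j}$ pointwise, so the light symbols contribute at most $\sum_{j:\,p_j<\eps}p_j\log\frac1{p_j}$. For a heavy symbol ($p_j\ge\eps$), Jensen applied to the concave map $\log$ gives $\mexp\log\frac{1+B_j}{np_j}\le\log\frac{1+(n-1)p_j}{np_j}\le\log\!\bigl(1+\tfrac1{np_j}\bigr)\le\log\!\bigl(1+\tfrac1{\eps n}\bigr)$; summing against the weights $p_j$ and using $\sum_j p_j\le1$, the heavy symbols contribute at most $\log(1+\tfrac1{\eps n})$. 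Adding the two estimates and taking the infimum over $\eps\in(0,1)$ gives the claim.

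The only genuinely delicate step is the measure-theoretic bookkeeping around the leave-one-out identity (summability, the $\mexp$--$\sum_j$ interchange, and the degenerate case $\ent(\bmu)=\infty$); everything past that uses only the crude pointwise bound $B_j\le n-1$ together with Jensen, so in particular no concentration or anti-concentration of the binomial counts is needed.
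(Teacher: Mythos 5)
Your proof is correct, but it follows a genuinely different route from the paper's. The paper never touches the binomial leave-one-out representation: it fixes $\eps$, \emph{collapses} all light symbols $J=\{i:\bmu(i)<\eps\}$ of $\empdist$ into a single mass (so the entropy can only drop, by data processing/merging), observes that the collapsed empirical distribution lives on an alphabet of size at most $1+1/\eps$, and then invokes Paninski's finite-alphabet bias bound to get the $\log\bigl(1+\tfrac1{\eps n}\bigr)$ term; the $\sum_{i\in J}\bmu(i)\log\tfrac1{\bmu(i)}$ term then arises from comparing the entropy of the collapsed $\bmu$ to $\ent(\bmu)$. Your argument instead derives the bias identity $\ent(\bmu)-\mexp\ent(\empdist)=\sum_j p_j\,\mexp\log\tfrac{1+B_j}{np_j}$ with $B_j\sim\Bin(n-1,p_j)$ and splits it at the same threshold, handling light symbols with the crude bound $1+B_j\le n$ and heavy symbols with Jensen applied to $\log$; this is self-contained (no appeal to Paninski's Proposition~1 or to a collapsing/data-processing step) at the cost of the exchangeability/conditioning bookkeeping, and it recovers exactly the same two terms. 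Your treatment of the degenerate case $\ent(\bmu)=\infty$ and of the $\mexp$--$\sum_j$ interchange (all relevant quantities lie in $[0,\log n]$ termwise) is fine, as is the global Jensen argument for the first inequality, which coincides with the paper's. One small quibble: your parenthetical claim that the summands $p_j\,\mexp\log\tfrac{1+B_j}{np_j}$ ``need not be nonnegative'' is actually false --- each equals $\mexp\bigl[\empdist(j)\log\tfrac{\empdist(j)}{\bmu(j)}\bigr]\ge 0$ by convexity of $t\mapsto t\log(t/p_j)$, so a term-by-term proof of the upper inequality would also have worked --- but since you do not rely on that remark, it does not affect the validity of your proof.
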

\begin{proof}
The first inequality follows from
Jensen's, since $\ent(\cdot)$
is concave and $\mexp\empdist=\bmu$.
To prove the second inequality, 
choose $\eps>0$,
put
$J:=\set{i\in\N:\bmu(i)<\eps}$,
and compute
\beq
\mexp{\ent(\empdist)} 
& = & 
\E{
\sum_{
i
\in\N\setminus J
} \empdist(i) \log \frac{1}{\empdist(i)} 
+ 
\sum_{
i\in J
} \empdist(i) \log \frac{1}{\empdist(i)} 
}\\
&\ge&
\E{
\sum_{
i
\in\N\setminus J
} \empdist(i) \log \frac{1}{\empdist(i)} 
+
\paren{\sum_{
i\in J
} \empdist(i)}\log \frac{1}{
      \sum_{
i\in J
      } \empdist(i)
      } } \\
&=:&
\mexp\ent(\tilde\bmu_n),
\eeq
where 
$\tilde\bmu_n$
is the ``collapsed''
version of $\empdist$,
where all of the masses
in $J
$
have been replaced
by 
a single mass equal to their sum,
and
the inequality 
holds because conditioning reduces entropy \citep[Eq.(2.157)]{cover2001elements}.
We observe that 
$\tilde\bmu_n$
has support size at most $1+1/\eps$
and invoke
\citet[Proposition~1]{paninski2003estimation}:
\beqn
\label{eq:paninski-bias}
\mexp\ent(\tilde\bmu_n)
&\ge&
\ent(\tilde\bmu)
-\log\paren{1+\frac1{\eps n}},
\eeqn
where $\tilde\bmu$
is the ``collapsed''
version of $\bmu$.
Now
\beq
\ent(\tilde\bmu)
&=&
\ent(\bmu)
+
\paren{\sum_{
i\in j
} \bmu(i)}\log \frac{1}{
\sum_{
i\in J
} \bmu(i)
}
-
\sum_{
i\in J
} \bmu(i) \log \frac{1}{\bmu(i)} 
\\&\ge&
\ent(\bmu)
-
\sum_{
i\in J
} \bmu(i) \log \frac{1}{\bmu(i)} ,
\eeq
which concludes the proof.
\end{proof}

The first part of the theorem will follow from the following proposition.

\begin{proposition}
\label{prop:plug-in-expec-bound}
For $\alpha\ge1$, $h>0$, $n\in \N$
and $\bmu\in\Delta_\N^{(\alpha)}[h]$,
we have
  \beq
\mexp{\abs{\ent(\bmu) - \ent(\empdist)}}
      &\leq&
      \frac{\log n}{\sqrt{n}} + 
      \inf_{0<\eps <1} 
      \sqprn{
       \paren{\log \frac{1}{\eps}}^{1-\alpha} %
      h
      + \log \paren{1 + \frac{1}{\eps n}} 
      }.
  \eeq
\end{proposition}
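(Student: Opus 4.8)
The plan is to split the error, by the triangle inequality, into a deterministic bias term and a mean fluctuation term,
$\mexp\abs{\ent(\bmu)-\ent(\empdist)}\le\abs{\ent(\bmu)-\mexp\ent(\empdist)}+\mexp\abs{\ent(\empdist)-\mexp\ent(\empdist)}$,
and to bound the first summand by the bracketed infimand and the second by $\log n/\sqrt n$. All the quantities involved are finite: for $\alpha>1$ the hypothesis $\ent[\alpha](\bmu)\le h$ forces $\ent(\bmu)<\infty$ (the contributions of the at most $\mathe$ indices with $\bmu(i)>1/\mathe$ are bounded, and for the rest $\log(1/\bmu(i))\le\paren{\log(1/\bmu(i))}^{\alpha}$), while for $\alpha=1$ the finiteness of $\ent(\bmu)$ is the hypothesis itself.

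\emph{Bias term.} Here I would invoke Lemma~\ref{lem:sandwich}. Its first inequality shows $\ent(\bmu)-\mexp\ent(\empdist)\ge0$, so the absolute value is superfluous, and its second inequality bounds this gap by $\inf_{0<\eps<1}\sqprn{\sum_{i\in\N:\bmu(i)<\eps}\bmu(i)\log(1/\bmu(i))+\log(1+1/(\eps n))}$. It then remains only to estimate the truncated-entropy sum, and this is the sole place the moment assumption enters. For each $i$ with $0<\bmu(i)<\eps<1$ one has $\log(1/\bmu(i))>\log(1/\eps)>0$, so, since $t\mapsto t^{1-\alpha}$ is non-increasing on $(0,\infty)$ for $\alpha\ge1$,
$\bmu(i)\log(1/\bmu(i))=\paren{\log(1/\bmu(i))}^{1-\alpha}\,\bmu(i)\paren{\log(1/\bmu(i))}^{\alpha}\le\paren{\log(1/\eps)}^{1-\alpha}\,\bmu(i)\paren{\log(1/\bmu(i))}^{\alpha}$;
summing over the relevant indices and using $\ent[\alpha](\bmu)\le h$ bounds the truncated sum by $\paren{\log(1/\eps)}^{1-\alpha}h$. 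Hence the bias is at most $\inf_{0<\eps<1}\sqprn{\paren{\log(1/\eps)}^{1-\alpha}h+\log(1+1/(\eps n))}$, which for $\alpha=1$ degenerates to the trivial $\ent(\bmu)\le h$.

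\emph{Fluctuation term.} I would first pass to the variance by Cauchy--Schwarz, $\mexp\abs{\ent(\empdist)-\mexp\ent(\empdist)}\le\sqrt{\Var{\ent(\empdist)}}$, and then control $\Var{\ent(\empdist)}$ by a bounded-differences (Efron--Stein) argument. Writing $\loch(z):=z\log(1/z)$, an elementary estimate gives $\abs{\loch((j+1)/n)-\loch(j/n)}\le(\log n)/n$ for all integers $0\le j\le n-1$; since altering one coordinate of the i.i.d.\ sample perturbs the empirical counts in exactly two bins, it changes $\ent(\empdist)=\sum_i\loch(\empdist(i))$ by at most $2\log n/n$. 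The bounded-differences variance inequality then yields $\Var{\ent(\empdist)}\le\tfrac14\cdot n\cdot(2\log n/n)^2=(\log n)^2/n$, whence the fluctuation term is at most $\log n/\sqrt n$. Adding the two bounds and taking the infimum over $\eps$ proves the proposition.

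\emph{Main obstacle.} The delicate point is the per-bin sensitivity estimate underpinning the variance bound: one must verify $\abs{\loch((j+1)/n)-\loch(j/n)}\le(\log n)/n$ uniformly in $j$ and $n$ --- in particular for the degenerate transitions in which a bin empties or is newly occupied, which are precisely the extremal ones --- and keep track of constants carefully enough that the fluctuation term comes out as $\log n/\sqrt n$ rather than a larger multiple, since that constant feeds directly into the $(1+\log n)/\sqrt n$ leading term of Theorem~\ref{thm:plug-in-minimax}(a) (obtained from this proposition by substituting $\eps=1/\sqrt n$).
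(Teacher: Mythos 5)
Your proof is correct and follows essentially the same route as the paper: the same triangle-inequality split into bias and fluctuation, the bias handled via Lemma~\ref{lem:sandwich} together with the factorization $\bmu(i)\log(1/\bmu(i))\le(\log(1/\eps))^{1-\alpha}\bmu(i)(\log(1/\bmu(i)))^{\alpha}$, and the fluctuation passed to $\sqrt{\Var{\ent(\empdist)}}\le\log n/\sqrt n$. The only difference is that where the paper cites \citet[Proposition~1(iv)]{antos2001estimating} for the variance bound, you re-derive it via the bounded-differences/Efron--Stein inequality with per-coordinate sensitivity $2\log n/n$; your sensitivity estimate $\abs{\loch((j+1)/n)-\loch(j/n)}\le(\log n)/n$ does hold for all $0\le j\le n-1$ (check the endpoint cases $j=0$ and small $n$ directly, mean value theorem otherwise), so this self-contained detour is sound.
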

\begin{proof}
Since 
by
Lemma~\ref{lem:sandwich},
$\abs{\ent(\bmu) - \mexp{\ent(\empdist)}}
=\ent(\bmu) - \mexp{\ent(\empdist)}$,
it follows from
the triangle 
and Jensen
inequalities
that
\beqn
\mexp{\abs{\ent(\bmu) - \ent(\empdist)}} 
& \leq & 
\mexp{\abs{\ent(\empdist)- \mexp{\ent(\empdist)}}} + 
\ent(\bmu) - \mexp{\ent(\empdist)} \nonumber\\
& \leq &
\sqrt{\Var{\ent(\empdist)}} + \ent(\bmu) - \mexp{\ent(\empdist)} \nonumber\\
\label{eq:logsqrtHmu}
& \leq & \label{eq:expecbias} 
\frac{\log n}{\sqrt{n}} + \ent(\bmu) - \mexp{\ent(\empdist)},
\eeqn
where the variance bound is from \citet[Proposition~1(iv)]{antos2001estimating}.

For any $\eps>0$,
Lemma~\ref{lem:sandwich} implies
\beqn
\mexp{\ent(\empdist)}
&\geq& 
\ent(\bmu)
-
\sum_{i\in\N:\bmu(i)<\eps}
\bmu(i)\log\frac1{\bmu(i)}
-
\log\paren{1+\frac1{\eps n}}
\nonumber\\
&\geq& \ent(\bmu) - 
\paren{\log \frac{1}{\eps}}^{1-\alpha} \sum_{i\in\N:\bmu(i)<\eps} \bmu(i) \paren{\log \frac{1}{\bmu(i)}}^\alpha
- \log \paren{1 + \frac{1}{\eps n}} 
\nonumber\\
\label{eq:Hmueps}
&\ge& \ent(\bmu) - \paren{\log \frac{1}{\eps}}^{1-\alpha} \ent[\alpha](\bmu) - \log \paren{1 + \frac{1}{\eps n}} ,
\eeqn
where the second 
and third
inequalities follow
from the obvious relations
\beq
\sum_{i:\bmu(i)<\eps}
\bmu(i)\log\frac1{\bmu(i)}
&\le&
\paren{\log \frac{1}{\eps}}^{1-\alpha} \sum_{i:\bmu(i)<\eps} \bmu(i) \paren{\log \frac{1}{\bmu(i)}}^\alpha
\\&\le&
\paren{\log \frac{1}{\eps}}^{1-\alpha}
\ent[\alpha](\bmu)
.
\eeq
The claim follows by combining
(\ref{eq:logsqrtHmu})
with
(\ref{eq:Hmueps}).
\end{proof}

\begin{proof}[Proof of Theorem~\ref{thm:plug-in-minimax}(a)]
Use the fact
 that
  $\risk_n^{(\alpha)}(h) \leq \mexp{\abs{\ent(\bmu) - \ent(\empdist)}}$, invoke
  Proposition~\ref{prop:plug-in-expec-bound}
  with
$\eps = \frac{1}{\sqrt n}$ 
and 
use
$\log (1+x) \leq x$.
\end{proof}
We now prove the second half of the theorem.
\begin{proof}[Proof of Theorem~\ref{thm:plug-in-minimax}(b)]
\newcommand{\U}{\mathcal{U}}
Let $\alpha > 0$, $n\in \N$
and
define two families of distributions:
\beq
&&\U_1 
\eqdef 
\set{\bmu_1 = \Unif([n^3])}
,\\&&
\U_2 
\eqdef
\set{\bmu_2 = \Unif(A) :  A \subset [n^3] , |A|=n^2 }
.
\eeq
Let $h \eqdef 3^\alpha \log^\alpha n$ and note that $\U_1 \cup \U_2 \subseteq \Delta_\N^{(\alpha)}[h]$. Let $E$ be the event that $\X = (X_1, \dots, X_n)$ has no repeating elements, i.e $\abs{\set{X_1, X_2, \dots ,X_n}}=n$.
Let $\bmu_1 \in \U_1, \bmu_2 \in \U_2$ and consider the values $\PR[\X \sim \bmu_1^n]{E}$ and $\PR[\X \sim \bmu_2^n]{E}$. 
For $m\in\N$,
define
$K(m)$
to be the smallest $k$
such that
when uniformly throwing $m$ balls into $k$ buckets, the probability of collision is at least $1/2$.
Since $K(m)$ is known\footnote{
Better bounds exist
\citep{brink2012birthday}.
} to be at least $\sqrt{m}$ (and hence $K(n^2) > n$) we have a lower bound of $\frac{1}{2}$ on both $\PR[\X \sim \bmu_1^n]{E}$ and $\PR[\X \sim \bmu_2^n]{E}$. 
Define
$\bmu_1^n|E$
as the distribution
on $\N^n$ induced by
conditioning the product
$\bmu_1^n$ on the event $E$,
and define 
$\bmu_2^n|E$
analogously.
Our key observation is that
conditional on $E$,
$\bmu_1^n$ is uniform on
$([n^3])_n$
whereas
$\bmu_2^n=\Unif(A)^n$ is uniform on
$(A)_{n}$,
where
$
(J)_k
:=
\set{
(x_1,\ldots,x_k)\in J^k
:
|\set{x_1,\ldots,x_k}|=k
}
$
is the set of all possible ordered samples of size $k$ from a distribution supported on $J$.
To verify
this observation,
take $x=(x_1,\ldots,x_n)\in [n^3]$ and note that $(\bmu_1^n|E)(x)=\PR[\X \sim \bmu_1^n]{\set{X=x}\cap E}/\PR[\X \sim \bmu_1^n]{E}=\PR[\X \sim \bmu_1^n]{\set{X=x}}/\PR[\X \sim \bmu_1^n]{E}$ if $x$ has no repeating elements (meaning $\set{X=x} \subseteq E$) and $(\bmu_1^n|E)(x)=0$ otherwise.
Then
\beq\risk_{n}^{(\alpha)}(h)
&\geq& \inf_{\hat{H}} \sup_{\bmu \in \U_1 \cup \U_2} \E[\X \sim \bmu^n]{|\hat{H}(\X) - \ent(\bmu)|} \\
&\overset{\text{(a)}}{\geq}&
\inf_{\hat{H}} \sup_{\bmu 
\in \U_1 \cup \U_2} \E[\X \sim \bmu^n|E]{|\hat{H}(\X) - \ent(\bmu)|}\PR[\X \sim \bmu^n]{E}\\
&\geq&
\inf_{\hat{H}} \frac{1}{2}  \sup_{\bmu 
\in \U_1 \cup \U_2} \E[\X \sim \bmu^n|E]{|\hat{H}(\X) - \ent(\bmu)|} \\
&\overset{\text{(b)}}{\geq}&
\inf_{\hat{H}} \frac{1}{4}  \Bigg(\E[\X \sim \bmu_1^n|E]{|\hat{H}(\X) - \ent(\bmu_1)|} 
+ \sup_{\bmu_2 
\in \U_2} \E[\X \sim \bmu_2^n|E]{|\hat{H}(\X) - \ent(\bmu_2)|}\Bigg) \\
&\overset{\text{(c)}}{\geq}&
\inf_{\hat{H}} \frac{1}{4}  \Bigg(\E[\X \sim \bmu_1^n|E]{|\hat{H}(\X) - \ent(\bmu_1)|} 
+ \E[\bmu_2 
\sim \Unif(\U_2)] {\E[\X \sim \bmu_2^n|E]{|\hat{H}(\X) - \ent(\bmu_2)|}} \Bigg) 
\\
&\overset{\text{(d)}}{=}&
\inf_{\hat{H}} \frac{1}{4}  \Bigg(\E[\X \sim \bmu_1^n|E]{|\hat{H}(\X) - \ent(\bmu_1)|} 
+ \E[\X \sim \bmu_1^n|E]{|\hat{H}(\X) - \ent(\bmu_2)|}\Bigg) 
\\
&=& 
\inf_{\hat{H}} \frac{1}{4}  \Bigg(\E[\X \sim \bmu_1^n|E]{|\hat{H}(\X) - \ent(\bmu_1)| + |\hat{H}(\X) - \ent(\bmu_2)|}\Bigg) \\
&\overset{\text{(e)}}{\geq}& \frac{1}{4}  {|\ent(\bmu_1) - \ent(\bmu_2)|} 
=
\frac{1}{4}  \log n 
= \frac{1}{4}  \frac{h}{3^\alpha \log^{\alpha-1} n},
\eeq
where 
(a) is from the law of total expectation
(the complement of $E$ is discarded), 
(b) 
and 
(c) are bounding a supremum by an average, 
(e) is from the triangle inequality,
and 
(d) is by observing that,
by symmetry,
the operators $\E[\bmu_2 \sim \Unif(\U_2)] {\E[\X \sim \bmu_2^n|E]{\cdot}} = \E[A \sim \Unif( \sqprn{n^2} )] {\E[\X \sim \Unif((A)_n)]{\cdot}}$ and $\E[\X \sim \bmu_1^n|E]{\cdot}=\E[\X \sim \Unif( \sqprn{n^3} )]{\cdot}$ are equivalent.
(There is a minor abuse of notation in transitions after (c),
since we 
write
$\bmu_2$
without specifying
a {\em particular} member of
$\U_2$. However,
$\bmu_2$
only occurs therein
as $\ent(\bmu_2)$,
and this value is identical
for all $\bmu_2\in\U_2$.)
\end{proof}

\section{Comparative Rates}
\label{sec:rates}

Our bounds have the crucial characteristic of
being empirical (under moment assumptions). When we \textit{observe}
favorable distributions (even without a priori
knowledge of the fact), we will benefit from
tighter bounds. 
This entails some cost,
and in the worst case
our bounds will be sub-optimal. In this section,
we 
illustrate
these trade-offs for various natural classes of distributions.

For the class of all finite alphabet distributions, our bound is sub-optimal. The MLE (plug-in estimator) is competitive with the optimal estimator up to logarithmic factors in $d$, but our bounds on the MLE are loose nearly quadratically in $d/n$, in the worst case. The convergence of the empirical distribution on a finite alphabet 
in $\ell_1$ occurs at rate $\Theta(\sqrt{d/n})$, whereas the MLE entropy estimator converges at rate 
$O\left(\sqrt{\left(\frac{d}{n}\right)^2+\frac{\log^2 d}{n}}\right)$, as follows from \citet[Proposition 1]{wu2016minimax}. So any approach that
upper bounds the entropy risk
via $\ell_1$ 
(as our Theorem \ref{thm:dimfree} or Section 4 of \citet{ho2010interplay}) will be worst-case suboptimal for this class of distributions.

Nevertheless, for certain classes of distributions our bounds 
(Theorem \ref{thm:dimfree} and Corollary \ref{cor:adaptive-hybrid-bound})
can significantly outperform 
the state of the art, for small and moderate-sized
samples.
To calculate the expected rate of our approach, we apply H\"older's inequality, as in the proof of Theorem \ref{thm:dimfree}: 
\beq
\mexp{\abs{\ent(\empdist) - \ent(\dist)}} 
&\leq&
\left(\E{ 2 \alpha^\alpha + \ent[\alpha](\dist) + \ent[\alpha](\empdist) }\right)^{1/\alpha}
\left( \mexp{\ello{\empdist- \dist}}\right)^{1 - 1/\alpha} .
\eeq

Now, as in the proof of Lemma \ref{lemma:decomposition-entropy-of-difference-measure}
(recall that
$\loch[\alpha]
(z):=
z \log^{\alpha}(1 /z)$)
,
\beq
\mexp{\ent[\alpha](\empdist) } 
&=& 
\mexp{ \sum_{\substack{i \in [d] \\ \dist(i) \vee \empdist(i) \geq \mathe^{-\alpha}}}  \loch[\alpha](\empdist(i)) } + \sum_{\substack{i \in [d] \\ \dist(i) < \mathe^{-\alpha}}} \loch[\alpha](\empdist(i)) \pred{\empdist(i) <\mathe^{ - \alpha}}
\\
&\leq& 
2 \mathe^{\alpha} \max_{z \in [\mathe^{- \alpha},1]} \loch[\alpha](z)  + \sum_{\substack{i \in [d] \\ \dist(i) < \mathe^{-\alpha}}} \mexp{ \loch[\alpha](\empdist(i)) \pred{\empdist(i) <\mathe^{ - \alpha}}} \\
&=& 
2 \mathe^{\alpha} \max_{z \in [\mathe^{ - \alpha},1]} \loch[\alpha](z)  + \sum_{\substack{i \in [d] \\ \dist(i) < \mathe^{-\alpha}}} \mexp{ \loch[\alpha](\empdist(i) \pred{\empdist(i) <\mathe^{ - \alpha}}) } \\
&\stackrel{(i)}{\leq}& 
2 \mathe^{\alpha} \max_{z \in [\mathe^{ - \alpha},1]} \loch[\alpha](z)  + \sum_{\substack{i \in [d] \\ \dist(i) < \mathe^{-\alpha}} }  \loch[\alpha](\mexp{ \empdist(i) \pred{\empdist(i) <\mathe^{ - \alpha}} })  \\
&\stackrel{(ii)}{\leq}& 
2 \mathe^{\alpha} \max_{z \in [\mathe^{ - \alpha},1]} \loch[\alpha](z)  + \sum_{\substack{i \in [d] \\ \dist(i) < \mathe^{-\alpha}} }  \loch[\alpha](\mexp{ \empdist(i) })  \\
&\leq&
2 \mathe^{\alpha} \max_{z \in [\mathe^{- \alpha},1]} \loch[\alpha](z)  + \ent[\alpha](\dist) 
\stackrel{(iii)}{\leq}
2 {\alpha^\alpha}  + \ent[\alpha](\dist),
\eeq
where $(i)$ follows from Jensen's inequality, $(ii)$ is because  $\loch[\alpha]
(z)$
is increasing on $z\in [0, \mathe^{-\alpha}]$ and $(iii)$ 
from (\ref{eq:maxhalpha}).

By \citet[Lemma 6]{berend2013sharp}, we have
$
\mexp{\ello{\empdist - \dist}} 
\le
\Lambda_n(\dist),
$
where 
$$\Lambda_n(\dist):=
2\sum_{\dist(j)<1/n} \dist(j) +
\frac{1}{\sqrt{n}}\sum_{\dist(j)\geq 1/n}\sqrt{\dist(j)}.$$
This quantity is always finite and
$
\Lambda_n(\dist)
\ninf
0
$
for all $\bmu\in\Delta_\N$
(ibid).
Thus, we obtain the bound
\beqn
\label{eq:expected_diff}
\mexp{\abs{\ent(\empdist) - \ent(\dist)}} 
&\leq& 
\left(4\alpha^\alpha +  2\ent[\alpha](\dist) \right)^{1/\alpha} \Lambda_n(\dist)^{1 - 1/\alpha}.
\eeqn

\paragraph*
{Finite support}
For distributions with a large support but concentrated mass, the bound in (\ref{eq:expected_diff}) 
compares favorably to
the state of the art, especially for smaller sample sizes. To illustrate this, consider a mixture of two
distributions with support sizes $d$ and $D$:
$\bmu'$ is uniform over $[d]$, $\bmu''$ is uniform
over $[d + D]$, 
and $\bmu:=p\bmu'+(1-p)\bmu''$,
for some $p\in[0,1]$.

The state-of-the-art upper bound for the plug-in estimator
can be inferred from
\citet[Appendix D]{wu2016minimax},
and has the form
\newcommand{\WY}{\operatorname{WY}}
\beq
\mexp{\abs{\ent(\empdist) - \ent(\dist)}}
&\le&
\WY(d,D,p,n) 
:=
\frac{d+D}{n}+\min \paren{C\frac{\log (d+D)}{\sqrt n}, \frac{\log n}{\sqrt n}}
\eeq
for some $C > 1$;
notice that it is insensitive to $p$.
For a fair comparison to (\ref{eq:expected_diff}),
our estimator's only a priori knowledge of $\bmu$
is that its support is of size at most $d+D$.
By Proposition~\ref{prop:max-alpha-ent},
we have
$
\max_{\bmu\in\Delta_K}
\ent[\alpha](\bmu)
\le 
\max\set{\alpha,\log K}^\alpha
+
(\alpha/\mathe)^\alpha
$.
This allows us to optimize over
$\alpha$ for each $n$:
\newcommand{\CKKW}{\operatorname{OUR}}
\beq
\CKKW(d,D,p,n) &:=& \inf_
{\alpha
>1
}
\Bigg( 
4
\alpha^\alpha + 
2
\max\set{\alpha,\log(d+D)}^\alpha
+
2(\alpha/\mathe)^\alpha
\Bigg)^{1/\alpha} \Lambda_n(\dist)^{1 - 1/\alpha}.
\eeq

\newcommand{\CT}{\operatorname{CT}}

Since $\bmu$ has finite support,
the Cover-Thomas inequality
(\ref{eq:CT}) also applies to
yield an adaptive estimate
when combined with (\ref{eq:ckw20}).
As $t\log(1/t)$ is concave, the latter
has the form
\beq
\mexp{\abs{\ent(\empdist) - \ent(\dist)}} 
&\leq&
\E{
\ello{\empdist-\dist}\log\frac{d+D}{\ello{\empdist-\dist}}
} \\
&\le&
\Lambda_n(\dist)\log\frac{d+D}{\Lambda_n(\dist)}
=:\CT(d,D,p,n)
.
\eeq
The comparisons are plotted in Figure~\ref{fig:CKKWvsWY} (Left).

\begin{figure}
\centering
\includegraphics[width=0.49\textwidth]{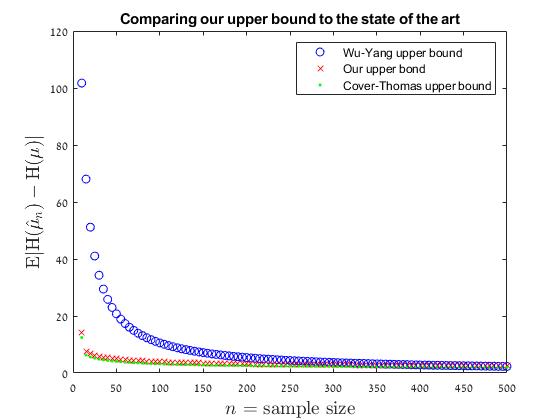}
\includegraphics[width=0.49\textwidth]{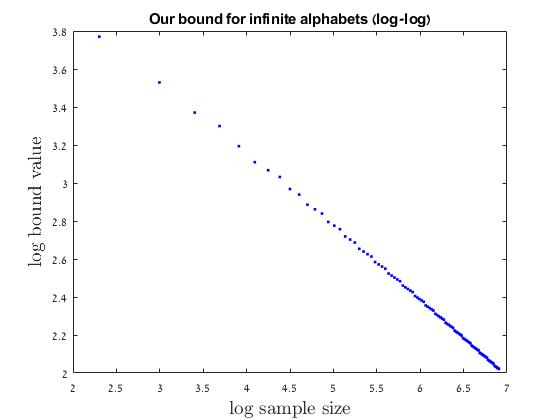}
\caption{
Left:
A 
comparison of the three bounds for
$d=10$,
$D=1000$,
$p=0.95$.
    Our bound considerably outperforms
    \citet{wu2016minimax}
    on small samples, and performs nearly as
    well as the finite-dimensional
    Cover-Thomas bound.
Right:
for our value of $q=2$, the log-log plot
shows roughly the correct slope of $-1/2$.
    }
    \label{fig:CKKWvsWY}
\end{figure}

\paragraph*{Infinite support}
In some cases our bound is nearly tight
(at least for the plug-in estimate), 
such as for
the family of zeta distributions $\bmu_q(i)\sim1/i^q$ with parameter $q 
>1
$. 
For this family,
\citet[Theorem~7]{antos2001convergence}
establish
a lower bound of
order
$
n^{\frac{1-q}{q}}
$
on $\mexp{\abs{\ent(\empdist) - \ent(\dist_q)}}$. 
It is straightforward to verify\footnote{
One can, for example, apply Cauchy's condensation
test,
followed up by the ratio test.
}
that
$\bmu_q\in\Delta_\N^{(\alpha)}$
for all $q,\alpha>1$.
Thus, we can optimize our bound
in (\ref{eq:expected_diff})
over all $\alpha>1$;
the results are presented in
Figure~\ref{fig:CKKWvsWY} (Right).

\section{Moments of Information}
\label{sec:moments}
We motivate our bounded-moment assumption
as
being 
\textit{considerably} less restrictive than the finite-alphabet assumptions and tail conditions studied in previous works
\citep{wu2016minimax,silva2018shannon};
see Section~\ref{sec:moments-tails} for a detailed
comparison.
Obtaining moment-based results
is essentially a desideratum laid out by \citet{antos2001convergence}, in which it is hypothesized that --- 
in parallel to the asymptotic distribution for the finite alphabet case --- 
moment conditions are the correct notion to achieve finite-sample estimates
in the infinite alphabet case. Our Theorem~\ref{thm:plug-in-minimax}(a) shows that, 
under these assumptions, 
there is an inverse logarithmic convergence rate (similar to, though distinct from, the results of \citet{wyner2003lower}) and, furthermore, using empirical quantities, this rate can be very much accelerated, as demonstrated in Corollary~\ref{cor:adaptive-hybrid-bound}.

In this section, we study some of the mathematical properties of moments of information.

\subsection{
Maximizing the $\alpha$-moment over a fixed support size
}
\begin{proposition}
\label{prop:max-alpha-ent}
For $K
\ge2
$ and $\alpha\ge1$,
\beq
\max\set{
\log K
,
(\alpha/\mathe)
}^\alpha
\;\le\;
\max_{\bmu\in\Delta_K}
\ent[\alpha](\bmu)
\;\le\;
\max\set{
\log K
,
\alpha
}^\alpha
+
(\alpha/\mathe)^\alpha
.
\eeq
\end{proposition}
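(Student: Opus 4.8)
The plan is to bound $\max_{\bmu\in\Delta_K}\ent[\alpha](\bmu)$ from both sides by exhibiting a good distribution for the lower bound and by a pointwise/term-by-term argument for the upper bound. Recall $\loch[\alpha](z) = z\ln^\alpha(1/z)$ attains its maximum on $[0,1]$ at $z=\mathe^{-\alpha}$ with value $\mathe^{-\alpha}\alpha^\alpha = (\alpha/\mathe)^\alpha$, as established in the proof of Lemma~\ref{lemma:decomposition-entropy-of-difference-measure}.

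For the \textbf{lower bound}, I would consider two candidate distributions and take whichever is better. First, the uniform distribution $\bmu = \Unif([K])$ gives $\ent[\alpha](\bmu) = \sum_{i=1}^K \frac1K \log^\alpha K = \log^\alpha K$, which handles the $(\log K)^\alpha$ term. Second, to capture the $(\alpha/\mathe)^\alpha$ term, I would place mass $\mathe^{-\alpha}$ on as many atoms as possible: since $K\ge2$ and we need the remaining mass to be spread somehow, note that if $\mathe^{-\alpha}\le 1/K$ (i.e. $\alpha \ge \log K$) we can simply observe that uniform already beats $(\alpha/\mathe)^\alpha$ because... actually the cleaner route: put one atom at $z=\mathe^{-\alpha}$ when $\mathe^{-\alpha}\le 1$ (always true) — but we must also place the remaining mass $1-\mathe^{-\alpha}$ somewhere on the other $K-1\ge1$ atoms without making $\ent[\alpha]$ negative, which it cannot be since $\loch[\alpha]\ge0$. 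Hence $\ent[\alpha](\bmu)\ge \loch[\alpha](\mathe^{-\alpha}) = (\alpha/\mathe)^\alpha$. Combining the two constructions via $\max$ gives the claimed lower bound $\max\{\log K,\alpha/\mathe\}^\alpha$.

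For the \textbf{upper bound}, fix any $\bmu\in\Delta_K$ and split its support into the ``large'' atoms $L=\{i:\bmu(i)>\mathe^{-\alpha}\}$ and the ``small'' atoms $S=\{i:\bmu(i)\le\mathe^{-\alpha}\}$. On $S$, since $\loch[\alpha]$ is increasing on $[0,\mathe^{-\alpha}]$, the worst case per atom is at most $\loch[\alpha](\mathe^{-\alpha})=(\alpha/\mathe)^\alpha$ — but summing this over up to $K$ atoms is too lossy, so instead I would bound $\sum_{i\in S}\loch[\alpha](\bmu(i))$ more carefully: each term is $\bmu(i)\log^\alpha(1/\bmu(i))\le \bmu(i)\cdot\max\{\alpha,\log K\}^\alpha$ when... no — the right move is the following. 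For $i\in S$ we have $\log(1/\bmu(i))\ge\alpha$, and on $L$ we have $|L|\le\mathe^\alpha$ and $\log(1/\bmu(i))<\alpha$. I would instead argue: $\ent[\alpha](\bmu)\le\big(\max_i\log(1/\bmu(i))\big)^{\alpha-1}\cdot\ent(\bmu)$ — no, that also needs $\log(1/\bmu(i))$ uniformly bounded, which fails. The clean argument is the concavity/majorization one: among all $\bmu\in\Delta_K$, $\ent[\alpha]$ is maximized at an extreme configuration; treating it as a sum over at most $K$ terms of $\loch[\alpha]$, use that $\loch[\alpha]$ is bounded by its max $(\alpha/\mathe)^\alpha$ but the total mass constraint $\sum\bmu(i)=1$ forces, for the many small atoms, $\sum_{i\in S}\loch[\alpha](\bmu(i))\le \log^\alpha K$ by Jensen applied to the concave-on-$S$... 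Actually the cleanest: split the max into $L$ contributing $\le\sum_{i\in L}\loch[\alpha](\bmu(i))\le |L|\cdot(\alpha/\mathe)^\alpha\le\mathe^\alpha(\alpha/\mathe)^\alpha=\alpha^\alpha$, wait that gives $\alpha^\alpha$ not $(\alpha/\mathe)^\alpha$ — and for $S$, since $\loch[\alpha](z)\le z\log^\alpha K$ fails for tiny $z$... the honest statement is $\sum_{i\in S}\bmu(i)\log^\alpha(1/\bmu(i))$ is maximized subject to $\sum_{i\in S}\bmu(i)=m\le1$ over $|S|\le K$ atoms at the uniform point by concavity of $z\mapsto \loch[\alpha]$ is \emph{not} concave globally — but it is concave on $[\mathe^{-\alpha/\text{?}}\ldots]$. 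Given the subtleties, the \textbf{main obstacle} is exactly pinning down the upper bound: I would handle it by a Lagrange/KKT argument showing the maximizer has all nonzero coordinates equal (so $\bmu=\Unif([j])$ for some $j\le K$), reducing to maximizing $\log^\alpha j$ over $j\in\{1,\dots,K\}$ plus a boundary correction from the unconstrained max $(\alpha/\mathe)^\alpha$ of $\loch[\alpha]$ when $j<\mathe^\alpha$; this yields $\max\{\log K,\alpha\}^\alpha+(\alpha/\mathe)^\alpha$ after bounding $\log^\alpha j\le\max\{\log K,\alpha\}^\alpha$ and adding the single-atom slack term. I expect the KKT step — verifying the maximizer is uniform on its support — to be where the real work lies, the rest being the elementary optimization of $j\mapsto\log^\alpha j$.
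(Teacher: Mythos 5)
Your lower bound is fine and matches the paper's: uniform on $[K]$ gives $\log^\alpha K$, and a single atom of mass $\mathe^{-\alpha}$ gives at least $\loch[\alpha](\mathe^{-\alpha})=(\alpha/\mathe)^\alpha$ since all other terms are nonnegative.

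The upper bound, however, has a genuine gap, and it sits exactly where you locate "the real work": your proposed KKT conclusion --- that any maximizer of $\ent[\alpha]$ over $\Delta_K$ is uniform on its support --- is false for $\alpha>1$. Your own lower-bound construction already refutes it: when $\alpha/\mathe>\log K$ (e.g.\ $K=2$, $\alpha=3$), the maximum is at least $(\alpha/\mathe)^\alpha$, which exceeds $\log^\alpha j$ for every $j\le K$, so no $\Unif([j])$ can be optimal; concretely, for $K=2$, $\alpha=3$ the distribution $(1-\mathe^{-3},\mathe^{-3})$ gives value $\approx 1.35$ versus $\log^3 2\approx 0.33$ for uniform. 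The reason is that $\loch[\alpha](z)=z\log^\alpha(1/z)$ is not concave on $[0,1]$: it is concave on $[0,\mathe^{-(\alpha-1)}]$ and convex on $[\mathe^{-(\alpha-1)},1]$, and stationarity of the Lagrangian only forces the nonzero masses to take at most \emph{two} distinct values, not one. The paper handles this with a smoothing lemma (Lemma~\ref{lem:folklore}): masses in the convex region can be pushed apart until at most one "heavy" mass $v\in[\mathe^{-(\alpha-1)},1]$ remains, and the remaining $K-1$ masses, lying in the concave region, are averaged; so the maximizer is either uniform or of the form $(v,\tfrac{1-v}{K-1},\dots,\tfrac{1-v}{K-1})$. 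The single heavy atom contributes at most $(\alpha/\mathe)^\alpha$, and the light part $g(v)=(1-v)\log^\alpha\tfrac{K-1}{1-v}$ is bounded by $\alpha^\alpha$ when $K-1<\mathe^\alpha$ and by $\log^\alpha K$ otherwise, giving $\max\set{\log K,\alpha}^\alpha+(\alpha/\mathe)^\alpha$. Your abandoned split-and-Jensen attempts cannot recover this constant: the "large" atoms $L=\set{i:\bmu(i)>\mathe^{-\alpha}}$ can number up to $\mathe^\alpha$, so bounding them by $|L|\cdot(\alpha/\mathe)^\alpha$ only yields $\alpha^\alpha$ in place of $(\alpha/\mathe)^\alpha$; the key missing idea is the structural reduction to a \emph{single} atom in the convex region.
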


We will need the following useful (and likely known) result.
\begin{lemma}[folklore]
\label{lem:folklore}
Suppose that
$0<a<1$
and
$f:[0,1]\to\R$
is 
strictly
concave on $[0,a]$
and 
strictly
convex on $[a,1]$.
Define the function $F:\Delta_K\to\R$
by
\beq
F(\bmu) = \sum_{i=1}^K f(\bmu(i)).
\eeq
Then any maximizer $\bmu^\star$
of $F$ is either the uniform distribution
or else has exactly $1$ ``heavy'' mass
$v\in[a,1]$
and $K-1$
identical
``light'' masses $(1-v)/(K-1)$.
\end{lemma}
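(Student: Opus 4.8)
The plan is to reason directly about the structure of an arbitrary maximizer $\bmu^\star\in\Delta_K$; existence need not be addressed, since the statement is conditional on being a maximizer and the two perturbation arguments below invoke only the pointwise values of $f$ and the strict concavity/convexity hypotheses. Partition $[K]$ into the \emph{light} indices $L=\set{i:\bmu^\star(i)\le a}$ and the \emph{heavy} indices $H=\set{i:\bmu^\star(i)>a}$. The goal is to establish two structural facts: (i) all light masses are equal, and (ii) $\abs{H}\le 1$. Granting these, the lemma follows immediately: if $H=\emptyset$ then every $\bmu^\star(i)\le a$ and they are all equal, forcing $\bmu^\star(i)=1/K$, i.e.\ $\bmu^\star$ is uniform; and if $\abs{H}=\set{i_0}$ with $v:=\bmu^\star(i_0)\in(a,1]\subseteq[a,1]$, then the other $K-1$ masses are light, hence all equal, and since they sum to $1-v$ each equals $(1-v)/(K-1)$ --- exactly the claimed alternative.

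\textbf{Equalizing the light masses.} Suppose $i,j\in L$ with $x:=\bmu^\star(i)\ne y:=\bmu^\star(j)$. Since $x,y\in[0,a]$, the midpoint $(x+y)/2$ also lies in $[0,a]$, and replacing both coordinates by $(x+y)/2$ leaves the vector in $\Delta_K$ (coordinate sum and nonnegativity are preserved). By strict concavity of $f$ on $[0,a]$, this replacement changes $F$ by $2f\paren{(x+y)/2}-f(x)-f(y)>0$, contradicting maximality of $\bmu^\star$. Hence all light masses coincide.

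\textbf{At most one heavy mass.} Suppose, for contradiction, that $\abs{H}\ge 2$, say $i\ne j$ with $\bmu^\star(i),\bmu^\star(j)>a$. Their sum exceeds $2a$ yet is at most $1$, so $a<1/2$ and each of $\bmu^\star(i),\bmu^\star(j)$ lies in $(a,1-a)\subseteq(a,1)$. Thus there is $\eta>0$ such that $g(t):=f(\bmu^\star(i)+t)+f(\bmu^\star(j)-t)$ has both arguments in $(a,1)$ whenever $\abs{t}<\eta$. A routine check from strict convexity of $f$ on $[a,1]$ (sum of two strictly convex functions of $t$) shows $g$ is strictly convex on $(-\eta,\eta)$, so $g(0)<\tfrac12\paren{g(\eta/2)+g(-\eta/2)}\le\max\set{g(\eta/2),g(-\eta/2)}$. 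Perturbing the coordinates $i,j$ of $\bmu^\star$ by $\pm\eta/2$ in whichever direction increases $g$ yields a vector still in $\Delta_K$ with strictly larger $F$, contradicting maximality; hence $\abs{H}\le 1$. The only delicate point in the whole argument is this last step --- specifically, making sure the heavy-mass perturbation stays inside the region $[a,1]$ where $f$ is strictly convex --- and that is handled for free by the observation that $\bmu^\star(i)+\bmu^\star(j)\le 1$ keeps both heavy masses bounded away from $0$ and from $1$. Everything else is a direct application of the defining inequalities of strict concavity and strict convexity, so no further obstacle is anticipated.
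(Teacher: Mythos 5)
Your proof is correct and takes essentially the same perturbation (``smoothing'') approach as the paper: strict convexity is used to rule out two coexisting heavy masses, and strict concavity to equalize the light ones. The only difference is organizational --- you argue by a single local perturbation at the maximizer (observing that two heavy masses must both lie in $(a,1-a)$, leaving room to perturb within the convex region), whereas the paper iteratively pushes heavy masses apart until one exits $[a,1]$; your version neatly sidesteps the paper's bookkeeping about which endpoint the pushed mass hits.
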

\begin{proof}
A standard ``smoothing'' argument
\citep{loh13}
shows that if two 
masses $u\le v$ 
occur in the interval $(a,1)$,
there is an $\eps>0$
such that
$f(u-\eps)+f(v+\eps)> f(u)+f(v)$.
In other words, such masses can be
pushed apart 
(keeping their sum fixed)
to increase 
the value
of $F$, until one of them reaches
the boundary of $[a,1]$.
Furthermore, since $0<a<u<v$
and $u+v\le1$, repeated iteration
of the ``pushing apart'' operation
will hit the left endpoint (i.e., $a$)
rather than the right one (i.e., $1$).
Having exhausted the ``pushing apart'' process,
we are left with one ``heavy'' mass
$v\in[a,1]$
and $K-1$ ``lighter'' ones in $[0,a]$.
But concavity implies that $F$ will be maximized
by pulling the lighter masses in (as opposed to pushing them apart), which amounts to replacing
each of them by the average of the $K-1$ values.
\end{proof}

\begin{proof}[Proof of Proposition~\ref{prop:max-alpha-ent}]
Choosing $\bmu$ to be the uniform distribution
yields $\ent[\alpha](\bmu)=
\log^\alpha K
$,
and choosing $\bmu$
such that $v:=\bmu(1)=\mathe^{-\alpha}$
yields
$\ent[\alpha](\bmu)\ge
v\log(1/v)^\alpha=(\alpha/\mathe)^\alpha$.
Thus, the lower bound is proven
and
it only remains to prove the upper bound.

Let $\bmu^\star$ be a maximizer for given $\alpha, K$.
Recall the function 
$\loch[\alpha](z)=z \log^{\alpha}(1 /z)$
and note that it
is
strictly
concave on $[0,\mathe^{- (\alpha-1)}]$ and
strictly
convex on $[\mathe^{- (\alpha-1)},1]$.
Then Lemma~\ref{lem:folklore}
shows that $\bmu^\star$
will 
either be uniform or else
attains at most one value
$v\in [\mathe^{- (\alpha-1)},1]$
in the convex interval,
with the remaining values
equal to $\frac{1-v}{K-1}\in[0,\mathe^{- (\alpha-1)}]$
in the concave interval.
Only the latter case is non-trivial:
\beq
\ent[\alpha](\bmu^\star)
&=&
v\paren{\log\frac1v}^\alpha
+
(1-v)\paren{\log\frac{K-1}{1-v}}^\alpha
\eeq
for some $v$ satisfying
\beqn
\label{eq:Hv}
0<
\frac{1-v}{K-1}
\le
\mathe^{- (\alpha-1)}
\le v<1
.
\eeqn

Now
$v\paren{\log\frac1v}^\alpha$
is maximized 
over $[0,1]$
by 
$v=\mathe^{-\alpha}$,
which yields the value
$(\alpha/\mathe)^\alpha$.

To bound the second term,
$g(v):=(1-v)\paren{\log\frac{K-1}{1-v}}^\alpha$,
we consider two cases:
(i) $K-1<\mathe^\alpha$
and
(ii) $K-1\ge\mathe^\alpha$.
In case (i), $g$ is maximized
by $v^\star=1-(K-1)/\mathe^\alpha$
and
\beq
g(v^\star) =
(1-v^\star)\paren{\log\frac{K-1}{1-v^\star}}^\alpha
\le
\paren{\log\frac{K-1}{1-v^\star}}^\alpha
=\alpha^\alpha.
\eeq
In case (ii),
$g$ is monotonically decreasing in $v$.
The constraint
$\frac{1-v}{K-1}\le v$
from (\ref{eq:Hv})
implies 
$v\ge1/K$,
so in this case,
\beq
g(v)\le
\paren{\log\frac{K-1}{1-1/K}}^\alpha
=
\log^\alpha K.
\eeq

This proves the upper bound.

\end{proof}

\subsection{Moments of Information vs. Moments of Distributions}

Since for all $r\ge1$
and
$\bmu\in\Delta_\N$,
we trivially have
$\nrm{\bmu}_r\le 1$,
it is only for $r<1$
that $\nrm{\bmu}_r$ conveys
nontrivial tail information.
However, as a measure of tail decay,
the latter
is rather crude:
$\nrm{\bmu}_r<\infty$
for {\em any} $r<1$
implies
$\ent[\alpha](\bmu)<\infty$
for {\em all} $\alpha>0$:
\begin{proposition}
\label{prop:moment-vs-norm}
For all $\alpha>0$ and $r\in(0,1)$, we have
\beq
\ent[\alpha](\bmu)
&\le&
\paren{\frac{\alpha}{\mathe(1-r)}}^{2\alpha}
\nrm{\bmu}_r
.
\eeq
\end{proposition}
\noindent Remark.
The bound above is quite loose.
For example, for
$\alpha = 1$
the AM-GM inequality
readily yields,
for any $r \in (0,1)$,
\beq
\ent(\bmu) \leq 
\frac{r}{1-r}
\ln \nrm{\bmu}_{r}
.
\eeq
It may be of interest to investigate bounds
of the form
\beq
\ent[\alpha](\bmu) \leq 
a(\alpha) \log ^{b(\alpha)} \nrm{\bmu}_{c(\alpha)},
\eeq
for some functions $a,b,c:\alpha\mapsto(0,\infty)$.
\begin{proof}
We first claim that
for $\alpha>0$ and $r\in(0,1)$,
\beq
x\log^\alpha\frac{1}{x}<x^r,
\qquad
x\in[0,1]
.
\eeq
Indeed,
the function $x\mapsto x^{1-r}\log^\alpha(\frac{1}{x})$ is maximized at
$x=\mathe^{\frac{\alpha}{r-1}}$, attaining 
the
maximum value 
of
$\mathe^{-\alpha}(\frac{\alpha}{1-r})^\alpha$. The latter is less than $1$ whenever $\alpha<\mathe (1-r)$.
Likewise, whenever $\alpha<c\mathe (1-r)$, we have $x\log^\alpha(\frac{1}{x})<c^\alpha x^r$, and so $\ent[\alpha](\bmu)<c^{2\alpha} \nrm{\bmu}_r$.
Choosing $c=\frac{\alpha}{\mathe(1-r)}$
proves the claim.
\end{proof}

\subsection{Resolution of \citeauthor{jurgensen2010entropy} Conjectures}
\label{sec:conjectures}

In this section, we give a complete resolution
of the conjectures posed by
\citet{jurgensen2010entropy}.

\paragraph{Conjecture 10.1} 
\citet[Conjecture 10.1]{jurgensen2010entropy} posits that for $d=2$,
$\max_{\bmu\in\Delta_2}\ent[\alpha](\bmu)$ has two maximizers $\pi_1^{(\alpha)}=\left(\frac{1}{2}+x^{(\alpha)},\frac{1}{2}-x^{(\alpha)}\right)$
 and $\pi_2^{(\alpha)}=\left(\frac{1}{2}-x^{(\alpha)},\frac{1}{2}+x^{(\alpha)}\right)$ for some value $x^{(\alpha)}$ such that $x^{(2)} = \frac{1}{2e}\sqrt{e^2-4}$ and $x^{(\alpha)}$ is strictly increasing as $\alpha\to\infty$ and  
 $\lim_{\alpha\to\infty}x^{(\alpha)}=\frac{1}{2}$.
 
 By Lemma \ref{lem:folklore}, there are at most three maximizers. Since $\ent[\alpha]\left((\mathe^{-\alpha},1-\mathe^{-\alpha})\right)>(\frac{\alpha}{\mathe})^\alpha>\log^\alpha(2)$, the uniform distribution is not a maximizer. So, including permutations, there are exactly two maximizers.
 
 Let $(u_{\alpha}^\star,v_{\alpha}^\star)$ be the increasingly-ordered maximizing distribution.
 We cannot have
 $u_{\alpha}^\star<\mathe^{-\alpha}$, because this would only decrease $\ent[\alpha]$ as compared to $\ent[\alpha]\left((\mathe^{-\alpha},1-\mathe^{-\alpha})\right)$. By Lemma \ref{lem:folklore}, $u_{\alpha}^\star\leq \mathe^{-(\alpha-1)}$, and similarly 
$v_{\alpha}^\star\in [1-\mathe^{-(\alpha-1)},1-\mathe^{-\alpha}]$. By convexity and monotonicity of $\log \frac{1}{x}$ on $[0,1]$, the difference between $\abs{\loch[\alpha+1](\mathe^{-\alpha})-\loch[\alpha+1](u_{\alpha}^\star)}$ and
$\abs{\loch[\alpha](\mathe^{-\alpha})-\loch[\alpha](u_{\alpha}^\star)}$ shrinks by more than the difference between 
$\abs{\loch[\alpha+1](v_{\alpha}^\star)-\loch[\alpha+1](1-\mathe^{-\alpha})}$ and
$\abs{\loch[\alpha](v_{\alpha}^\star)-\loch[\alpha](1-\mathe^{-\alpha})}$ grows.  So, for $\max_{\bmu\in\Delta_2}\ent[\alpha](\bmu)$ to be less than $\max_{\bmu\in\Delta_2}\ent[\alpha+1](\bmu)$, as occurs (for sufficiently large $\alpha$) by resolution of Conjecture 10.4 below, it must be that $u_{\alpha+1}^\star<u_{\alpha}^\star$ and $v_{\alpha+1}^\star>v_{\alpha}^\star$ (as $\alpha$ tends to infinity).

Furthermore, 
$\mathe^{-(\alpha-1)}\to 0$
as $\alpha\to\infty$,
and so $\lim_{\alpha\to\infty}x^{(\alpha)}=\frac{1}{2}$.

To find the value of \(x^{(2)}\), set \(x \eqdef x^{(2)} \) and find the critical points of \[ \ent[2]\paren{x+\frac{1}{2}, x-\frac{1}{2}} 
\eqdef \left(\frac{1}{2}-x\right) \log ^2\left(\frac{1}{2}-x\right)+\left(x+\frac{1}{2}\right) \log ^2\left(x+\frac{1}{2}\right).\]
Differentiating and factoring, we get
\[ \frac{\mathd}{\mathd x} \ent[2]\paren{x+\frac{1}{2}, x-\frac{1}{2}} =
-\left(\log  \left(-x+\frac{1}{2}\right)-\log  \left(x+\frac{1}{2}\right)\right) \left(2+\log  \left(-x+\frac{1}{2}\right)+\log  \left(x+\frac{1}{2}\right)\right)=0. \]
Now \(x=0\) is a solution which we know is not the maximum and we also get \(x= \pm \frac{\sqrt{\mathe^2-4}}{2 \mathe}\) which exactly what \citet{jurgensen2010entropy} conjectured.

\paragraph{Conjecture 10.2}
\citet[Conjecture 10.2]{jurgensen2010entropy} posits that
for $\pi_1^{(\alpha)}, \pi_2^{(\alpha)}$ as above and
$\alpha\geq2$, 
we have
$\ent[\alpha](\pi_1^{(\alpha)})=\ent[\alpha](\pi_2^{(\alpha)})>(\log 2)^\alpha$ 
and moreover, this quantity
is strictly increasing and unbounded as $\alpha\to\infty$.

 Since $\log(2)<\frac{\alpha}{\mathe}$, by Proposition~\ref{prop:max-alpha-ent}, 
 $\ent[\alpha](\pi_1^{(\alpha)})=\ent[\alpha](\pi_2^{(\alpha)})>(\log 2)^\alpha$ and unbounded.

\paragraph{Conjecture 10.3}
\citet[Conjecture 10.3]{jurgensen2010entropy} posits that 
$\ent[\alpha]$ has $d$ local maxima
for
for $d>2$ and $\alpha>2$.
By 
Lemma~\ref{lem:folklore},
the only maxima are the uniform distribution and the $d$ permutations of 
$\sup_{v\in [\mathe^{-(\alpha-1)}, 1): (d-1)u+v=1 }(d-1)u\log^\alpha(u) +
v\log^\alpha(v)$, should the latter exist with $v$ in interior of interval. So there are either $1$ (e.g. $\mathe^\alpha=d$) or $d+1$ local maxima.

\paragraph{Conjecture 10.4}
For $d,\alpha\in\N$,
define $h^\star_{d,\alpha}
:=
\max_{\bmu\in\Delta_d}\ent[\alpha](\bmu)
$.
\citet[Conjecture 10.4]{jurgensen2010entropy}
posits that 
$
h^\star_{d,\alpha+1}
>
h^\star_{d,\alpha}
$
and that $
\lim_{\alpha\to\infty}
h^\star_{d,\alpha}
=\infty
$.
In light of the lower bound
in Proposition~\ref{prop:max-alpha-ent},
the latter claim (i.e., unboundedness)
is immediate.

For $d>\mathe^\alpha$, by Proposition \ref{prop:max-alpha-ent},
$\max_{\bmu\in\Delta_d}\ent[\alpha](\bmu)\leq 2\log^\alpha d$ and
$\log^{\alpha+1} d \leq\max_{\bmu\in\Delta_d}\ent[\alpha+1](\bmu)$.
We find, therefore, that for $d>\mathe^2$, 
$\max_{\bmu\in\Delta_d}\ent[\alpha](\bmu)\leq\max_{\bmu\in\Delta_d}\ent[\alpha+1](\bmu)$.

But since the conjecture takes interest in the case of $\alpha$ tending to infinity, let us focus on $\mathe^\alpha \geq d-1$.

By Lemma \ref{lem:folklore}, $\bmu_\alpha^\star:=\arg\max_{\bmu\in\Delta_d}\ent[\alpha](\bmu)$, is either uniform or takes two distinct values $v\in[\mathe^{-(\alpha-1)},1]$ and $u=\frac{1-v}{d-1}\in [0,\mathe^{-(\alpha-1)}]$.

For $x\in [0,\frac{1}{\mathe}]$, $\log(1/x)\geq 1$, so $\loch[\alpha+1](x)\geq\loch[\alpha](x)$.
So if $u,v \in [0,\frac{1}{\mathe}]$, then 
$\ent[\alpha](\bmu_\alpha^\star)< \ent[\alpha+1](\bmu_\alpha^\star) \leq \max_{\bmu\in\Delta_d}\ent[\alpha+1](\bmu)$

So assume instead that $v \in (\frac{1}{\mathe},1]$. In this case, we can bound the difference
$\loch[\alpha](v)-\loch[\alpha+1](v)\leq \loch[\alpha](v)\leq\frac{1}{\mathe}$.

Since $\mathe^\alpha \geq d-1$, $u>\mathe^{-\alpha}$ and must lie in $[\mathe^{-\alpha},\mathe^{-(\alpha-1)}]$. But for this entire interval, 
$x\in [\mathe^{-\alpha},\mathe^{-(\alpha-1)}]$ has 
$\loch[\alpha+1](x)- \loch[\alpha](x)\geq \frac{1}{\mathe} \geq \loch[\alpha](v)$.
In order to see this, it suffices, since $\loch[\alpha]$ and $\loch[\alpha+1]$ are both decreasing on $[\mathe^{-\alpha},\mathe^{-(\alpha-1)}]$, to show that 
$\loch[\alpha+1](\mathe^{-(\alpha-1)})-\loch[\alpha](\mathe^{-\alpha})> \frac{1}{\mathe}$. This can be seen by observing
$\mathe (\alpha-1)^{\alpha+1}-\alpha^\alpha>e^{\alpha-1}$ for all $\alpha\geq 3$.

It follows, therefore, that for $\alpha\in\N$, when $d>\mathe^\alpha$, for all $d\geq 8$, or for $\mathe^\alpha \geq d-1$, for all $\alpha\geq 3$, 
the former claim (monotonicity) holds, i.e.
$$\max_{\bmu\in\Delta_d}\ent[\alpha](\bmu) < \max_{\bmu\in\Delta_d}\ent[\alpha+1](\bmu).$$

(This can also be generalized to any $\beta(\neq\alpha+1)$ for sufficiently large $d$ or $\alpha$ respectively, if one so desired).

\paragraph{Our conjecture}
We close the section with a conjecture of our own.

\begin{conjecture}
For
$
\mathe^{\alpha}
<
d
<
\infty
$, 
we have
$\max_{\bmu \in \Delta_d} H_\alpha(\bmu) = \log^\alpha{d}$
and moreover,
the maximum is achieved by the uniform distribution over $[d]$.
\end{conjecture}

\section*{Acknowledgments}
This research was partially supported by
the Israel Science Foundation
(grant No. 1602/19) and Amazon Research Award.
GW is supported by the Special Postdoctoral Researcher Program (SPDR) of RIKEN.
We thank Ioannis Kontoyiannis, Igal Sason, and Sergio Verd\'u
for enlightening conversations.
We thank the anonymous referees for helpful suggestions and corrections.

%\bibliographystyle{plainnat}
%\bibliography{bibliography}

\appendix

\subsection{Comparison of tail-vs-moment assumptions}
\label{sec:moments-tails}
Expanding upon the observation of 
\citet{antos2001estimating}
that moment of information
assumptions
are
``weaker (and somewhat more natural''
than tail decay rates,
we make some concrete comparisons between the two.

Let us list a number of conditions one might impose
$\bmu$:
\begin{enumerate}
	\item[$A_1(\alpha)$:] Finite $\alpha$-moment of information \citet{antos2001estimating}: \\ For some $\alpha > 1, \E{\log^\alpha \frac{1}{\bmu(X)}} < \infty$.
	\item[$B_1(\alpha, M_\alpha)$:] Bounded $\alpha$-moment of information: \\ For some $\alpha > 1, \exists M_\alpha > 0,  \E{\log^\alpha \frac{1}{\bmu(X)}} < M_\alpha.$
	
	\item[$A_2(\beta)$:] Superlinear $\beta$ tail decay: \\ For some $\beta > 1, \bmu(i) = \bigO\left( \frac{1}{i^\beta} \right)$.
	\item[$B_2(\beta, \underline{c}_\beta, \overline{c}_\beta)$:] Controlled superlinear $\beta$ tail decay \citep{antos2001estimating}: \\ For some $\beta > 1, \exists \underline{c}_\beta, \overline{c}_\beta > 0$ such that $\frac{\underline{c}_\beta}{i^\beta} \leq \bmu(i) \leq \frac{\overline{c}_\beta}{i^\beta}$ .
	
	\item[$A_3(\gamma)$:] Superlinearly $\gamma$-logarithmic tail decay: \\ For some $\gamma > 1, \bmu(i) = \bigO\left( \frac{1}{i \log^\gamma{i}} \right)$.
	\item[$B_3(\gamma, \underline{c}_\gamma, \overline{c}_\gamma)$:] Controlled superlinearly $\gamma$-logarithmic tail decay  \citep{antos2001estimating}: 
	\\ For some $\gamma > 1, \exists \underline{c}_\gamma, \overline{c}_\gamma > 0$ such that $\frac{\underline{c}_\gamma}{i \log i^\gamma} \leq \bmu(i) \leq \frac{\overline{c}_\gamma}{i \log i^\gamma}$ .
	\end{enumerate}

\begin{proposition}
\begin{enumerate}
The following implications hold:
    \item[$(a)$] $A_3(\gamma), \gamma > 2 \implies A_1(\alpha), \forall \alpha < \gamma - 1$.
    \item[$(b)$] $A_1(\alpha), \alpha > 1 \centernot \implies A_3(\gamma), \gamma < \alpha + 1$
    \item[$(c)$] $B_1(\alpha, M_\alpha) \implies A_1(\alpha)$
    \item[$(d)$] $A_2(\beta) \implies A_3(\gamma), \forall \gamma > 1$
    \item[$(e)$] $A_2(\beta) \implies A_1(\alpha), \forall \alpha > 1$
    \item[$(f)$] $B_2(\beta, \underline{c}_\beta, \overline{c}_\beta) \implies A_2(\beta)$
    \item[$(g)$] $B_3(\gamma, \underline{c}_\gamma, \overline{c}_\gamma) \implies A_3(\gamma)$
    \item[$(h)$] $B_2(\beta, \underline{c}_\beta, \overline{c}_\beta) \implies B_1(\alpha, M_\alpha)$ with $M_\alpha = M_\alpha(\alpha, \overline{c}_\beta, \beta)$
    \item[$(i)$] $B_3(\gamma, \underline{c}_\gamma, \overline{c}_\gamma), \gamma > 2 \implies B_1(\alpha, M_\alpha), \forall \alpha < \gamma - 1$, with $M_\alpha = M_\alpha(\alpha, \overline{c}_\gamma, \gamma)$.
\end{enumerate}
\end{proposition}

\paragraph{Remark}
We start by noticing that
\begin{equation}
\label{equation:moment-of-information-governed-by-small-probabilities}
\begin{split}
\sum_{\substack{i \in \Omega \\ \bmu(i) \leq \mathe^{-\alpha}}} \bmu(i) \log^\alpha \frac{1}{\bmu(i)} \leq \sum_{i \in \Omega} \bmu(i) \log^\alpha \frac{1}{\bmu(i)} \leq \alpha^\alpha + \sum_{\substack{i \in \Omega \\ \bmu(i) \leq \mathe^{-\alpha}}} \bmu(i) \log^\alpha \frac{1}{\bmu(i)},
\end{split}
\end{equation}
and that on $[0, \mathe^{-\alpha}]$, it holds that $x \to x \log^\alpha{\frac{1}{x}}$ is increasing.
Since $\alpha^\alpha$ is finite, the convergence of the series is primarily governed by what happens or small probabilities.

\paragraph{Proof $(c), (d), (f), (g)$}
Immediate.

\paragraph{Proof $(a)$}
Suppose that assumption $A_3(\gamma)$ holds.
Then $\exists N \in \N, C > 0$ such that for any $i \geq N$, $\bmu(i) \leq C \frac{1}{i \log^\gamma{i}}$.
We focus on the rightmost term of \eqref{equation:moment-of-information-governed-by-small-probabilities}.
\begin{equation*}
\begin{split}
\sum_{\substack{i \in \Omega \\ \bmu(i) \leq\mathe^{-\alpha}}} \bmu(i) \log^\alpha \frac{1}{\bmu(i)} &\leq N\mathe^{-\alpha}\alpha^\alpha + 
\sum_{\substack{i \in \Omega \\ \bmu(i) \leq\mathe^{-\alpha} \\ i > N }} \bmu(i) \log^\alpha \frac{1}{\bmu(i)} \\
&\leq N\mathe^{-\alpha}\alpha^\alpha + 
C \sum_{\substack{i \in \Omega \\ \bmu(i) \leq\mathe^{-\alpha} \\ i > N }} \frac{1}{i \log^\gamma{i}} \log^\alpha \frac{i \log^\gamma{i}}{C} \\
&\leq N\mathe^{-\alpha}\alpha^\alpha + 
C \sum_{\substack{i \in \Omega \\ \bmu(i) \leq\mathe^{-\alpha} \\ i > N }} \frac{\left(\log i + \gamma\log \log i +  \log 1/C \right)^\alpha}{i \log^\gamma{i}}.
\end{split}
\end{equation*}
Since
$\log i$ dominates both $\log \log i$ and $\log 1/C$, 
the series converges whenever
$\sum_{i \in \N} \frac{1}{i \log^{\gamma - \alpha} i}$
converges, which occurs exactly for $\gamma > \alpha + 1$. 

\paragraph{Proof $(e)$}

Let $\bmu$, and suppose that $A_2(\beta)$ holds for some $\beta$.
Then $\exists N \in \N, C > 0$ 
such that $\forall i \in \N, i > N$, $\bmu(i) \leq C \frac{1}{i^\beta}$.
Let $\alpha > 1$. We decompose the expression of $\E{\log^\alpha\frac{1}{\bmu(X)}}$:
\begin{equation*}
\begin{split}
\sum_{i \in \Omega} \bmu(i) \log^\alpha \frac{1}{\bmu(i)} &= \sum_{\substack{i \in \Omega \\ i \leq N}} \bmu(i) \log^\alpha \frac{1}{\bmu(i)} + \sum_{\substack{i \in \Omega \\ i > N \\ \bmu(i) \leq\mathe^{-\alpha}}} \underbrace{\bmu(i) \log^\alpha \frac{1}{\bmu(i)}}_{x \mapsto x \log^\alpha \frac{1}{x} \text{ increasing on } [0,\mathe^{-\alpha}]} + \sum_{\substack{i \in \Omega \\ i > N \\ \bmu(i) >\mathe^{-\alpha}}} \bmu(i) \log^\alpha \frac{1}{\bmu(i)} \\
&\leq N \underbrace{\max_{x \in [0,1]} \set{x \log^\alpha \frac{1}{x}}}_{=\mathe^{-\alpha} \alpha^\alpha} + C \beta^\alpha \underbrace{\sum_{\substack{i \in \Omega \\ i > N\\ \bmu(i) \leq\mathe^{-\alpha}}} \frac{1}{i^\beta} \log^\alpha i}_{\leq S_{\alpha, \beta} < \infty} + \underbrace{\sum_{\substack{i \in \Omega \\ i > N \\ \bmu(i) >\mathe^{-\alpha}}} }_{\text{ at most $\mathe^\alpha$ elements}}\bmu(i) \log^\alpha \frac{1}{\bmu(i)}\\
&\leq (N\mathe^{-\alpha} + 1) \alpha^\alpha + C \beta^\alpha S_{\alpha, \beta},
\end{split}
\end{equation*}
such that for any $\alpha > 1$, there exists $M_\alpha < \infty$ 
that bounds the $\alpha$-moment of information. 
Notice that although existence is guaranteed, 
$N, C$ depend on the unknown $\bmu$.
The asymptotic nature of assumption $A_2(\beta)$
is therefore not enough to specify what $M_\alpha$ is.

\paragraph{Proof $(h)$}
Starting from \eqref{equation:moment-of-information-governed-by-small-probabilities},
\begin{equation*}
\begin{split}
\sum_{i \in \Omega} \bmu(i) \log^\alpha \frac{1}{\bmu(i)} &\leq \alpha^\alpha + \sum_{\substack{i \in \Omega \\ \bmu(i) \leq\mathe^{-\alpha}}} \bmu(i) \log^\alpha \frac{1}{\bmu(i)} \leq \alpha^\alpha + \sum_{\substack{i \in \Omega \\ \bmu(i) \leq\mathe^{-\alpha}}} \frac{\overline{c}_\beta}{i^\beta} \log^\alpha \frac{i^\beta}{\overline{c}_\beta} \\
\end{split}
\end{equation*}
which is upper bounded by a converging series, whose value is entirely defined by $\alpha, \beta$ and $\overline{c}_\beta$.

\paragraph{Proof $(i)$}
Follows the arguments of the proof for $(a)$ and the proof of $(h)$.
The series converges exactly when $\alpha < \gamma - 1$, and if it does, the value of the converging series is a function of $\alpha, \gamma, \overline{c}_\gamma$.

\paragraph{Proof $(b)$}
Assume that $\bmu$ satisfies $A_1(\alpha)$ for some $\alpha > 1$.
Then, $\forall \gamma < \alpha + 1$, the collection of distributions such that
$\bmu(i) \in \bigO \left( \frac{(\log{\log{i}})^\delta}{i \log^\gamma{i}} \right), \delta > 1$ also verifies the hypothesis.

\end{document}